\documentclass[12pt,oneside,english,12pt, oneside, reqno]{amsart}
\usepackage[T1]{fontenc}
\usepackage[latin9]{inputenc}
\usepackage{xcolor}
\usepackage{float}
\usepackage{amstext}
\usepackage{amsthm}
\usepackage{amssymb}
\usepackage{graphicx}
\usepackage{geometry}
\usepackage{setspace}
\usepackage[authoryear,round,comma]{natbib}
\PassOptionsToPackage{normalem}{ulem}
\usepackage{ulem}
\makeatletter

\floatstyle{ruled}
\newfloat{algorithm}{tbp}{loa}
\providecommand{\algorithmname}{Algorithm}
\floatname{algorithm}{\protect\algorithmname}
\providecolor{lyxadded}{rgb}{0,0,1}
\providecolor{lyxdeleted}{rgb}{1,0,0}
\DeclareRobustCommand{\mklyxadded}[1]{\textcolor{lyxadded}\bgroup#1\egroup}
\DeclareRobustCommand{\mklyxdeleted}[1]{\textcolor{lyxdeleted}\bgroup\mklyxsout{#1}\egroup}
\DeclareRobustCommand{\mklyxsout}[1]{\ifx\\#1\else\sout{#1}\fi}
\theoremstyle{definition}
\newtheorem{defn}{\protect\definitionname}
\theoremstyle{plain}
\newtheorem{prop}{\protect\propositionname}
\theoremstyle{plain}
\newtheorem*{assumption*}{\protect\assumptionname}

\usepackage{lmodern}
\usepackage[T1]{fontenc}
\usepackage{braket}
\usepackage{hyperref}
\hypersetup{
colorlinks=false,
citecolor=green
}
\usepackage{algorithm}
\usepackage[noend]{algpseudocode}

\makeatother

\usepackage{babel}
\providecommand{\assumptionname}{Assumption}
\providecommand{\definitionname}{Definition}
\providecommand{\propositionname}{Proposition}

\begin{document}
\title[Limited belief propagation]{Limited belief propagation \\and contingent thinking}
\date{7 June 2026}
\author{Andrew Ellis and Ran Spiegler}
\begin{abstract}
An agent updates her beliefs over a set of variables after observing
some of them. We provide a representation of updated beliefs that
captures limited propagation of her observation's implications through
the directed acyclic graph that represents the relations between all
variables. Failure of contingent thinking occurs when she performs
fewer inference steps from unobserved variables than observed ones,
leading to correlation neglect and violations of iterated expectations.
Our framework offers a new perspective on existing experiments about
contingent thinking and suggests new directions. We characterize the
model's relationship with familiar Bayesian and non-Bayesian benchmarks,
and illustrate it with applications to public-good provision and social
learning games.\medskip{}
\end{abstract}

\thanks{Ellis: LSE, a.ellis@lse.ac.uk. Spiegler: Tel Aviv University and UCL,
rani@tauex.tau.ac.il. Spiegler acknowledges financial support from
UKRI Frontier Research Grant no. EP/Y033361/1. We thank Yair Weiss,
as well as seminar and conference audiences in Budapest and Vienna,
for helpful comments.\bigskip{}
}
\maketitle

\section{Introduction}

People make systematic belief errors in settings where correct inference
requires conditioning on hypothetical contingencies. For example,
bidders in auctions suffer from the ``winner's curse'': they fail
to account for the information about the other bidders' types given
by the hypothetical event of winning the auction. In this paper, we
propose to model these failures, based on the postulate that people
do not think through all the implications of their information and
choices. 

To illustrate our approach, consider a second-price auction with common
values, of the kind taught in basic Game Theory courses. \citet[pp 297-9]{osborne2004introduction}
describes a game with two bidders, $1$ and $2$, that each observes
an independent private signal $t_{i}$ before choosing a bid $b_{i}$.
The bids and the auction rules determine the object's allocation $z$.
The object's common value is $t_{1}+t_{2}$. Bidders' strategies induce
a joint probability distribution over the five variables, which individual
bidders rely on for their strategic calculations. Specifically, to
bid rationally, bidder $1$ must calculate the (joint) probability
of winning, her payment, and player 2's type given her type and each
possible bid. Osborne fixes bidder $1$'s bid $b_{1}$ and type $t_{1}$
and computes it as follows. First, calculate bidder $1$'s probability
of winning the auction given $b_{1}$; second, find the distribution
over $b_{2}$ conditional on bidder $1$ winning the auction with
the bid $b_{1}$; and third, derive the conditional distribution of
$t_{2}$ given $b_{2}$ using bidder $2$'s strategy. Combining these
calculations, we get
\begin{equation}
p(t_{2},b_{2},z\mid t_{1},b_{1})=p(z\mid b_{1})p(b_{2}\mid b_{1},z)p(t_{2}\mid b_{2}),\label{eq:bayes intro}
\end{equation}
which can be used to verify that a conjectured strategy is optimal.

This sequence of calculations is an example of a procedure called
\textit{belief propagation}. It simplifies the process of Bayesian
updating using a \emph{directed acyclic graph (DAG),} in this case
\begin{equation}
t_{1}\rightarrow b_{1}\rightarrow z\leftarrow b_{2}\leftarrow t_{2},\label{eq: dag intro}
\end{equation}
that represents the underlying causal relationships between variables,
following the literature on graphical probabilistic models, e.g.,
\citet{cowell1999probabilistic}, \citet{pearl2009causality}, or
\citet{koller2009probabilistic}. The procedure breaks a large computational
task into more manageable, smaller pieces. It has three salient features.
The variables are updated one at a time in sequence ($z$, then $b_{2}$,
then $t_{2}$). Each calculation is local, in the sense that it involves
variables that are near each other according to the DAG (no step uses
$t_{1}$, and the last step leaves out $z$ and $b_{1}$ as well).
Each step treats the variables that have already been updated in the
same way, regardless of whether they are known ($t_{1},b_{1}$) or
inferred ($z,b_{2}$). 

We study probabilistic inference by a decision-maker (DM) who tries
to perform this belief propagation procedure but who is limited in
the ability to execute it. She still considers variables sequentially
but may consider only some of the graph paths to each variable. This
means that at each step, fewer variables are nearby and each affects
others in fewer ways. For example, rather than propagating information
through all paths in the DAG, the DM may only propagate information
from one or two steps away. 

Suppose that in the auction example, bidder $1$ can perform only
two steps of inference from her choice variable $b_{1}$ and her type
variable $t_{1}$, but cannot perform any steps from the hypothetical
variables $z$ and $b_{2}$. This captures two forms of bounded rationality:
inability to perform long chains of reasoning, and inability to draw
inferences from hypothetical information. The inference proceeds by
considering $z$, then $b_{2}$, and finally $t_{2}$. Since $z$
is sufficiently close to both $t_{1}$ and $b_{1}$, the DM updates
it to $p(z\mid b_{1},t_{1})$ (which equals $p(z\mid b_{1})$ because
$t_{1}$ is independent of $z$ given $b_{1}$). Turning to $b_{2}$,
this variable has a short enough path from $b_{1}$ ($b_{1}\rightarrow z\leftarrow b_{2}$).
However, this path treats $b_{1}$ and $b_{2}$ as mutually independent;
and moreover, the DM does not draw any inference about $b_{2}$ from
the hypothetical variable $z$. Therefore, the DM updates $b_{2}$
to $p(b_{2})$. Finally, $t_{2}$ is is not close enough to either
$t_{1}$ or $b_{1}$, and the DM does not draw any inference about
$t_{2}$ from the hypothetical variables $z$ and $b_{2}$. Therefore,
she updates $t_{2}$ to $p(t_{2})$.

Combining these calculations, bidder $1$'s conditional belief will
thus be
\begin{equation}
\hat{p}\left(t_{2},b_{2},z\mid t_{1},b_{1}\right)=p\left(z\mid b_{1},t_{1}\right)p\left(b_{2}\right)p\left(t_{2}\right)\label{eq: lpb auction}
\end{equation}
This inference is incorrect: winning the auction with a low $b_{1}$
implies low $b_{2}$ and $t_{2}$, but bidder $1$ fails to draw \textit{any}
inference about either. These belief errors result in the winner's
curse: the DM fails to draw correct inferences about the opponent's
bid and type conditional on winning the auction. 

We propose a general procedure for updating a distribution over a
collection of variables consistent with a DAG. Some of these variables
are ``givens'' (signals or actions) known to the DM, while the others
are hypothetical. The DM calculates a conditional belief over the
hypothetical variables as a function of the values of the givens.
\emph{Limited-propagation beliefs} consider the hypothetical variables
in sequence, according to a fixed ordering $\succ$. For every pair
of nodes $i$ and $j$ in the DAG, there is a subset of the paths
in the DAG from $i$ to $j$ (paths can involve links in either direction),
referred to as the \emph{flows} from $i$ to $j$. The flows capture
the relationships considered when passing information to $j$. The
sparser the flows to $j$, the more limited the DM's ability to draw
inferences about it. For instance, in the auction example above, the
flows from $b_{1}$ to $b_{2}$ and $z$ consist of the paths $b_{1}\rightarrow z$
and $b_{1}\rightarrow z\leftarrow b_{2}$; there is no flow from $b_{1}$
to $t_{2}$; and there are no flows from the hypothetical variables
$z$ and $b_{1}$. 

The DM's conditional belief over the variable $j$ is formed as if
the prior distribution is consistent with a sub-DAG consisting of
all the flows into $j$, conditioned on all the $\succ$-earlier variables
that flow to $j$. Sometimes, as in the above auction example, the
sub-DAGs associated with different nodes are mutually consistent and
can be subsumed in a single (potentially misspecified) DAG. In other
cases, limited-propagation beliefs \textit{cannot} be rewritten as
if the DM reasons with a single subjective DAG. We provide a procedural
foundation for the representation, a slight modification of the Variable
Elimination algorithm that can be found in many texts on Bayesian
networks, e.g., \citet{koller2009probabilistic}.

We use the specification of flows to capture intuitions about what
makes a sequence of local inferences hard for DMs. While the correct
procedure can go in either causal direction, our DM may find it harder
to travel backwards than forwards. While the correct procedure propagates
beliefs from a node to all its neighbors, our DM may only be able
go to one of them. While the correct procedure propagates information
from any number of givens, our DM's ability to do so may be diminished
when they are numerous. While the correct procedure does not discriminate
between observed and unobserved variables, our DM may find it harder
to propagate beliefs from the latter.

The structure of the causal relationships between variables drives
departures from Bayesian updating in limited-propagation beliefs.
This distinguishes our approach from updating models in which departures
from Bayes' rule depend on the properties of the revealed event \citep[cf.][etc.]{ortoleva2012modeling,ortoleva2024alternatives,gennaioli2010what,zhao2022pseudo}.
It thus speaks to a distinct class of anomalies for Bayesianism, referred
to as \textit{failures of contingent thinking}. In a survey article,
\citet{niederle2023cognitive} define this phenomenon as occurring
when the DM ``does optimize in a presentation of the problem that
helps them focus on all relevant contingencies {[}such as providing
additional information{]}... but does not optimize if the problem
is presented without such aids.'' Limited-propagation beliefs exhibit
failures of contingent thinking when they propagate hypotheticals
less than givens.

We use this feature to suggest new commonalities in the experimental
findings. For example, the classic ``Acquire a Company'' experiments
\citep{samuelson1985winner} share a DAG with ``Monty Hall'' experiments
\citep{friedman1998monty}, and subjects fail to update correctly
in both. However, the two-company version of ``Acquire a Company''
\citep{martinez2019failures}, in which more subjects act like Bayesians,
has a different underlying DAG. Limited-propagation beliefs can explain
both failures to update correctly in the former instances and correct
updating in the latter. It provides a similarly unifying perspective
into experimental evidence on pivotal reasoning that includes \citet{kagel1987information,esponda2014hypothetical,li2017obviously,ngangoue2021learning,calford2024contingent}.
Our framework not only organizes existing experimental findings, but
also suggests novel directions for experimental work, as we illustrate
with an example based on the recreational puzzle Kakuro.\footnote{The idea that belief anomalies (including the Monty Hall fallacy)
can be unified by failures to reason about a DAG is suggested in Ch.
6 of \citet{pearl2018book}.}

We then turn to general properties of limited-propagation beliefs.
We first study connections to existing updating models. Most similar
are ones based on failures to update correctly about the relationship
between actions and types in strategic settings \citep{eyster2005cursed,jehiel2008revisiting,esponda2008behavioral}.
As argued by \citet{spiegler2016bayesian}, these models can be thought
of as fitting a single subjective DAG to the objective distribution.
As mentioned above, limited-propagation beliefs do not generally have
this property. However, we provide a sufficient condition for it to
hold. In particular, this condition holds whenever the non-directed
version of the true DAG is a tree (as in the above auction example).

Using this result and focusing on the case that the true process is
consistent with a causal chain, we demonstrate that limited-propagation
beliefs have a simple testable implication. Correct inference about
a variable that is represented by a distant node implies correct inference
about a variable that is represented by a nearer one. We also use
it to show that expanding the DM's flows (so that she can perform
longer chains of inference) or her set of givens (so that she becomes
better informed) need not increase belief accuracy. 

Finally, we investigate the effects of limited contingent thinking.
We show that contingent thinking is necessary for Bayesian updating
when there is more than one hypothetical variable. We provide simple
necessary and sufficient conditions for correct updating. Under the
assumption that the DM can perform \textit{no} contingent thinking
(captured by no flows from non-givens), we consider whether limited-propagation
beliefs satisfy \emph{iterated expectations}, a feature closely tied
to dynamic consistency and satisfied by Bayes' rule. The property
typically fails; we provide necessary and sufficient conditions for
it to hold. Finally, when the DM can perform no contingent thinking,
her conditional belief exhibits\emph{ correlation neglect.} Thus,
as with our discussion of experimental work, we see how the limited-propagation
framework offers a new perspective into belief errors and clarifies
their connection to limited contingent thinking. 

\section{The Model}

We structure this section as follows. \textit{Directed acyclic graphs}
(DAGs) are first introduced, and related definitions are provided.
Then, we present our ``limited-propagation beliefs'' representation.
Finally, examples illustrate some of its features.

\subsection{Preliminaries: DAGs}

Consider a random vector $X=(X_{1},...,X_{n})$ and an index set $N=\{1,...,n\}$.
A \textit{causal model} is represented by a DAG $(N,Q)$, where $N$
is a set of \textit{nodes} and $Q\subseteq N\times N$ is a set of
directed \textit{edges} that has no cycles. A node $i$ represents
the random variable $X_{i}$, and the edge $(i,j)$ means that $X_{i}$
is perceived as a direct cause of $X_{j}$. Throughout, we identify
a DAG with its set of edges and suppress its set of nodes. For $x=(x_{1},...,x_{n})$
and $E\subset N$, denote $x_{E}=(x_{i})_{i\in E}$. Abusing notation,
$Q(i)=\{j\in N\mid(j,i)\in Q\}$ is the set of \emph{parents} of node
$i$. 

For a probability distribution $p$ over $X$, the \emph{factorization}
of $p$ according to $Q$ is
\[
p_{Q}(x)=\prod\limits^{N}_{i=1}p\left(x_{i}\mid x_{Q(i)}\right).
\]
E.g., for $Q:$ $1\rightarrow3\rightarrow4\leftarrow2$, $p_{Q}(x)=p(x_{1})p(x_{2})p(x_{3}\mid x_{1})p(x_{4}\mid x_{2},x_{3})$.
We say that $p$ is\emph{ consistent} with $Q$ if $p=p_{Q}$, which
means that $p$ factorizes according to $Q$.

A \emph{path} from $j$ to $i$ in the DAG $Q$ is a sequence of nodes
$\left(i_{1},\dots,i_{K}\right)$ such that $i_{1}=j$, $i_{K}=i$,
and for every $k=1,...,K-1$, either $(i_{k},i_{k+1})$ or $(i_{k+1},i_{k})$
belongs to $Q$. We will often identify a path with the set of edges
it consists of. The \emph{skeleton} of a DAG $Q$, denoted $\tilde{Q}$,
is the undirected version of $Q$, i.e., $\tilde{Q}=\left\{ \{i,j\}\mid(i,j)\in Q\right\} $.
A \emph{v-collider} in $Q$ is a triple of nodes $(i,j,k)$ so that
$(i,j),(k,j)\in Q$ and $\{i,k\}\notin\tilde{Q}$.

\subsection{Limited-propagation beliefs}

We model a decision-maker (DM) who forms beliefs about $X$. The DM's
prior is a probability measure $p$ over $X$ that is consistent with
a DAG $R$. The DAG $R$ represents the \textit{true} underlying causal
model. Let $G\subset N$ represent the set of \emph{given }variables,
which the DM chooses or observes prior to making her choice. For simplicity
in notation and to avoid issues of updating on zero probability events,
we assume that each variable has finite support and that $p$ has
full support; in examples, we sometimes deviate from these assumptions
provided that they do not cause difficulties with the formulas. When
it does not cause confusion, we sometimes refer to the variable $X_{i}$
by its index $i$, and use $x_{i}$ to denote realizations of $X_{i}$.

A conventional DM would form the posterior belief $p(x_{N-G}|x_{G})$
in accordance with Bayes' rule. We model a DM who instead updates
her beliefs by propagating information from given variables along
some paths in $R$. The propagation is limited in the sense that it
may halt too early. We offer a representation of \textit{limited-propagation
beliefs} that is based on two parameters.

First, $\succ$ is a \textit{linear }\emph{order} on $N$, which defines
the sequence in which the DM considers the variables. We refer to
$\succ$ simply as the ``order''. When $j$ precedes $i$, i.e.,
$i\succ j$, the DM considers $j$ before $i$. We require that the
nodes in $G$ are considered first: if $g\in G$ and $i\notin G$,
then $i\succ g$. 

Second, for every pair of distinct nodes $i,j\in N$, the \textit{flow}
$F_{j\rightarrow i}\subset R$ from $j$ to $i$ is a (possibly empty)
subset of the paths in $R$ from $j$ to $i$. The flow from $j$
to $i$ describes the agent's ability to draw inferences about $X_{i}$
from $X_{j}$. Expanded flows capture better ability to carry out
inferences. Flows may depend on the specification of $G$, but we
suppress this notation until it is relevant in Section \ref{subsec: contingent thinking}.
Denote by $R_{i}=\cup_{j\neq i}F_{j\rightarrow i}$ the flows into
$i$, noting that it is a sub-DAG of $R$.\footnote{Note that we can regard $F_{i\rightarrow j}$ as a subset of $R$,
because as mentioned above, we identify it with the set of edges it
consists of.}

We assume that if there is a path from $j$ to $k$ in $F_{j\rightarrow i}$,
then this path also belongs to $F_{j\rightarrow k}$. If information
from $j$ flows to $i$ through $k$, then it also flows to $k$.
Importantly, we do \textit{not} require that a path from $k$ to $i$
in $F_{j\rightarrow i}$ must belong to $F_{k\rightarrow i}$. The
latter property would rule out imperfect contingent thinking, when
$j\in G$ and $k\in N-G$.

In a specification we use repeatedly, $F_{j\rightarrow i}=\emptyset$
for all $j\notin G$, and $F_{j\rightarrow i}$ consists of all paths
from $j$ to $i$ that have at most $K$ edges when $j\in G$. The
auction example in the Introduction obeys this specification with
$K=2$: $G$ consists of $t_{1}$ and $b_{1}$, and the flows are
$F_{t_{1}\rightarrow x}=t_{1}\rightarrow b_{1}\rightarrow z$, $F_{b_{1}\rightarrow b_{2}}=b_{1}\rightarrow z\leftarrow b_{2}$,
$F_{t_{1}\rightarrow b_{1}}=t_{1}\rightarrow b_{1}$, $F_{b_{1}\rightarrow z}=b_{1}\rightarrow z$,
and $F_{i\rightarrow j}=\emptyset$ for all $i\notin G$. This captures
absence of contingent thinking, in the sense that the DM draws no
inferences from hypothetical variables. 

Flows are flexible objects that can capture various sources of difficulty
in updating beliefs over multiple variables. As mentioned above, they
can capture limited contingent reasoning by drawing a distinction
between flows from givens and from hypotheticals. Flows can depend
on the size of $G$, capturing the intuition that it is harder to
draw inferences from a larger set of givens. Flows may also depend
on variable salience; e.g., the DM may perform more inference steps
from a variable that serves as an explicit argument in her payoff
function. Flows may distinguish between ``downstream'' paths (which
track the causal flow) and ``upstream'' paths (which go against
the causal direction), capturing the intuition that the former are
easier to perform (see \citet{ambuehl2026mental} for experimental
evidence). Flows may also take the entire set of paths into account;
e.g., the DM may struggle to pursue a path from one node to another
when there are additional ``backdoor'' paths between them.

For every $i\in N-G$, define the set 
\[
M(i)=\left\{ j\in N\mid i\succ j\ and\ F_{j\rightarrow i}\neq\emptyset\right\} .
\]
The nodes in $M(i)$ represent the variables that both precede $X_{i}$
and flow to it. The DM deems these variables relevant for updating
her belief about $X_{i}$. Thus, flows not only specify inferential
paths between variables, but also determine whether the DM retains
a variable's updated value when she proceeds to update beliefs about
subsequent variables.
\begin{defn}
\label{def:LP beliefs}For flows $\{F_{j\rightarrow i}\}_{i\neq j}$
and order $\succ$, the conditional distribution $\hat{p}$ is a \emph{limited-propagation
belief} if 
\begin{equation}
\hat{p}\left(x_{N-G}\mid x_{G}\right)=\prod\limits_{i\in N-G}p_{R_{i}}\left(x_{i}\mid x_{M(i)}\right)\label{eq: subjective beliefs}
\end{equation}
for every $x_{N}$.\smallskip{}

In contrast, the Bayesian update of $p$ can be written as
\[
p\left(x_{N-G}|x_{G}\right)=\prod_{i\in N-G}p_{R}\left(x_{i}\mid x_{\{j\in N\mid i\succ j\}}\right)
\]
using the chain rule of probability and the assumption that $p$ is
consistent with $R$. Thus, a limited-propagation belief departs from
correct Bayesian updating in two ways. First, it conditions $x_{i}$
on a subset $M(i)$ of $\{j\in N\mid i\succ j\}$, instead of the
whole set. Variables that do not flow to $i$ are absent from the
conditioning set. Second, limited-propagation beliefs compute the
update given conditioned variables using only the subset $R_{i}$
of the pathways in $R$. Paths not in the flow are neglected when
forming beliefs.

Limited-propagation beliefs capture the following mental process.
Consider the earliest node $i\in N-G$ according to the order $\succ$.
A Bayesian DM would compute $p(x_{i}\mid x_{G})$ by considering all
of the ways in which the particular realization of the given variables
could be relevant for inferring or predicting $X_{i}$. This inference
can be executed via a sequence of local computations along the pathways
in $R$ from $G$ nodes to $i$. She then turns to the next node according
to $\succ$, denoted $i+1$ for the sake of this explanation. She
repeats the procedure for $X_{i+1}$ and computes $p\left(x_{i+1}|x_{G},x_{i}\right)$,
by considering all the ways in which any of the givens, \textit{or}
$X_{i}$, could be relevant for inferring or predicting $X_{i+1}$.
She proceeds to examine the subsequent variables in the same manner.

By comparison, a DM with limited-propagation beliefs only transfers
information from $G$ to $i$ via the smaller set of paths $R_{i}$,
and using only the givens that flow to $i$, namely $M(i)$. She computes
$\hat{p}\left(x_{i}\mid x_{G}\right)$ via a sequence of local computations
along the paths in $R_{i}$ from $M(i)$ to $i$. She then turns to
$i+1$, and repeats the same procedure. Now, she propagates the information
from $M(i+1)$ (which may or may not include $i$) along the paths
in $R_{i+1}$. The DM continues in this fashion down the ordering
$\succ$ until she covers all variables in $N-G$. In Section \ref{sec:Procedure}
we provide an algorithmically explicit account of this inference process.
\end{defn}

\subsection{Examples}

We now present examples that illustrate the limited-propagation belief
representation. Throughout the sub-section, we adopt the convention
of explicitly stating only the longest flows from a variable and filling
in the rest using the betweenness property. 

\subsubsection{Auction}

Returning to the running example, the DM is player $1$ in the common
value auction. The true DAG and Bayesian updates are as described
by (\ref{eq: dag intro}) and (\ref{eq:bayes intro}), respectively.
The givens and the flows are as described earlier in this section.
For an ordering that satisfies $z\prec b_{2}\prec t_{2}$, limited-propagation
beliefs are given by (\ref{eq: lpb auction}), as explained in the
Introduction.

The DM's limited-propagation belief results in under-appreciation
of the relationship between the probability of winning, the bid of
the other player, and the other player's type. This reflects the experimentally
documented poor performance of subjects in common-value, sealed-bid,
second-price auctions \citep{kagel1986winner,kagel1987information}.
To capture bidders' reasoning about the related format of an English
(ascending-price) auction, we can add $b_{2}$ to the DM's set of
givens. In line with our previous assumption that the DM can perform
two steps of inference from given variables, let $F_{b_{2}\rightarrow b_{1}}=b_{2}\rightarrow z\leftarrow b_{1}$
and $F_{b_{2}\rightarrow t_{2}}=b_{2}\rightarrow z\leftarrow b_{1}$;
flows from the other variables are unchanged. Under this alternative
specification, the limited-propagation belief coincides with the correct
Bayesian posterior, for any order $\succ$. This is consistent with
the experimental finding that subjects' behavior is closer to the
rational benchmark in English auctions \citep{levin1996Revenue,li2017obviously}.
We will revisit this observation in Section \ref{sec:Limited-propagation-and}.

\subsubsection{Monty Hall\label{subsec:example monty}}

In the Monty Hall problem \citep{selvin1975monty}, the DM is the
contestant. There are three doors, one of which conceals a prize.
The DM picks a door, the host opens an unpicked door without the prize,
and then the DM is asked whether she would like to switch doors. To
formalize this, $\theta$ indicates the location of the prize, $h$
indicates the DM's initial guess, and $d$ is the door opened. The
true DAG is $\theta\rightarrow d\leftarrow h,$ which is a $v$-collider.
The DM chooses whether to revise her guess. To do so, she must infer
$p(\theta\mid d,h)$. 

We embed the problem in our framework, and compare the DM's behavior
under limited-propagation beliefs to the Bayesian rational benchmark.
We capture limited inference capabilities by assuming that flows consist
of single-link paths from the givens $d$ and $h$. Then, $F_{d\rightarrow\theta}=\theta\rightarrow d$
and $F_{h\rightarrow\theta}=\emptyset$. The order is irrelevant because
$N-G$ is a singleton. Our story behind this specification is not
necessarily that longer chains are hard as such. Rather, as argued
by \citet[Ch. 6]{pearl2018book} in the context of the Monty Hall
problem, reasoning over a $v$-collider may be difficult for people.

Applying Equation (\ref{eq: subjective beliefs}), $\hat{p}\left(x_{\theta}\mid x_{h},x_{d}\right)=p\left(x_{\theta}\mid x_{d}\right).$
In words, the DM departs from correct updating by failing to take
into account her own initial guess $h$ when drawing the inference
from $d$. Given the usual parameterization of the Monty Hall problem
($\theta,h\thicksim U\{1,2,3\}$), and $d\mid(\theta,h)\thicksim U\left(\{1,2,3\}-\{\theta,h\}\right)$),
$p(\theta\mid d)$ coincides with the prior $p(\theta)$, and the
DM would not strictly prefer to revise her guess, unlike a Bayesian.

\subsubsection{Upstream and downstream inferences}

Suppose the true data-generating process obeys the following DAG $R$:
\[
\begin{array}{ccccccccc}
 &  & 2 &  &  &  & 5\\
 & \nearrow &  & \searrow &  & \nearrow &  & \searrow\\
1 &  &  &  & 4 &  &  &  & 7\\
 & \searrow &  & \nearrow &  & \searrow &  & \nearrow\\
 &  & 3 &  &  &  & 6
\end{array}
\]

Let $G=\{4\}$. Assume that for every distinct nodes $i,j$, $F_{i\rightarrow j}=\left\{ (i,j),(j,i)\right\} \cap R$.
That is, flows consist of all paths of length $1$, without making
a distinction between given and hypothetical variables. The order
satisfies $1\succ2,3\succ4$ and $7\succ5,6\succ4$.

Let us construct the limited-propagation belief. First, for every
node $i$, $R_{i}$ consists of all the edges it is part of. Second,
for every $i=2,3,5,6$, $M(i)=\{4\}$, while $M(1)=\{2,3\}$ and $M(7)=\{5,6\}$.
Then,
\[
\hat{p}\left(\cdot\mid x_{4}\right)=p\left(x_{1}\mid x_{2},x_{3}\right)p\left(x_{7}\mid x_{5},x_{6}\right)\prod_{i=2,3,5,6}p\left(x_{i}\mid x_{4}\right)
\]
That is, the DM updates her belief as if she held the subjective DAG
that deviates from $R$ by inverting the links on its left part. Thus,
although the rule for flows does not explicitly discriminate between
upstream and downstream links, the directional asymmetry inherent
in $R$ implies that downstream inferences are correct while upstream
inferences are wrong.

\subsubsection{A multi-DAG representation\label{subsec:A-multi-DAG-representation}}

In all previous examples, limited-propagation beliefs were effectively
consistent with a DM who fits the unconditional distribution $p$
to a single subjective DAG, and then conditions the resulting subjective
probability on the givens. We present an example in which this is
not the case.

Suppose that the true DAG $R$ is
\[
\begin{array}{ccccc}
1 & \leftarrow & 3\\
 &  & \downarrow & \searrow\\
2 & \rightarrow & 4 & \rightarrow & 5
\end{array}.
\]
Let $G=\{1,2\}$. For any $j>2$, let $F_{1\rightarrow j}$ and $F_{2\rightarrow j}$
consist of the shortest paths in $R$ from $1$ or $2$ to $j$. Let
$F_{4\rightarrow5}=4\rightarrow5$, and $F_{i\rightarrow j}=\emptyset$
for all other combinations. Then, $R_{3}$ and $R_{4}$ are $R$ restricted
to $\{1,2,3,4\}$, while $R_{5}$ is $R$ minus the link $3\rightarrow4$.
The order $\succ$ agrees with the natural order $>$. Then, $M(3)=M(4)=\{1,2\}$,
and $M(5)=\{1,2,4\}$. 

The DM's limited-propagation belief is thus
\[
\hat{p}\left(x_{3},x_{4},x_{5}\mid x_{1},x_{2}\right)=p\left(x_{3}\mid x_{1},x_{2}\right)p\left(x_{4}\mid x_{1},x_{2}\right)p_{R_{5}}\left(x_{5}\mid x_{1},x_{2},x_{4}\right)
\]
This belief can be rewritten as
\begin{equation}
\hat{p}\left(x_{3},x_{4},x_{5}\mid x_{1},x_{2}\right)=p\left(x_{3}\mid x_{1},x_{2}\right)p\left(x_{4}\mid x_{1},x_{2}\right)\sum_{k}p\left(X_{3}=k\mid x_{1}\right)p\left(x_{5}\mid X_{3}=k,x_{2}\right)\label{eq: formula public good}
\end{equation}
The belief $\hat{p}(\cdot\mid x_{1},x_{2})$ does not reduce to $p_{Q}(\cdot\mid x_{1},x_{2})$
for any DAG $Q$. Clearly, $\hat{p}(\cdot\mid x_{1},x_{2})\neq p(\cdot\mid x_{1},x_{2})$,
so $Q=R$ will not work. One other natural candidate would be the
DAG $Q=R_{5}$. If $\hat{p}(\cdot\mid x_{1},x_{2})=p_{Q}(\cdot\mid x_{1},x_{2})$,
then $5$ would be independent of $1$ given $3$ and $4$, which
it is not. Proposition \ref{prop: no DAG example} makes this point
more explicitly. Section \ref{example public good} presents an application
based on this example.

\section{Evidence on contingent thinking\label{sec:Limited-propagation-and}}

In this section we demonstrate how the framework can organize and
shed light on two sets of experiments studying settings that require
contingent thinking. We then use the recreational puzzle Kakuro to
suggest a new direction for experimental work on this topic.

The first set of experiments includes Monty Hall (MH) and Acquire
a Company (AaC). \citet{friedman1998monty} conducts an experiment
on the Monty Hall problem described above, and finds that most subjects
do not switch, even though switching strictly raises their probability
of winning. In AaC, a company has value $v\in\left\{ v_{L},v_{H}\right\} $
with equal probability, where $v_{L}<v_{H}<2v_{L}$. The company knows
$v$, the subject values the company at $\frac{3}{2}v$, does not
know $v$, knows that the company knows $v$, and offers a price.
The company accepts if and only if the price weakly exceeds $v$.
\citet{samuelson1985winner} and \citet{charness2009origin} find
that most subjects offer more than $v_{L}$, the optimal strategy
for non-risk-lovers.

Both MH and AaC have data-generating processes with the same underlying
DAG, namely a $v$-collider $1\rightarrow3\leftarrow2.$ In MH, $1$
is the door picked, $2$ is the location of the prize, and $3$ is
the door opened. In AaC, $1$ is the price offered, $2$ is the value
of the company, and $3$ is whether the company accepts. A key difference
is that the subject's relevant choice variable belongs to the DAG
in AaC (the offered price is node $1$) but lies outside it in MH,
where the initial pick is node $1$ but the choice that actually matters,
whether to switch, is not represented in the DAG. With no contingent
thinking and when flows from givens cover all paths of length $2$,
limited-propagation beliefs make correct predictions about both whether
to switch and which price to offer. They make incorrect predictions
when they leave out the path from $1$ to $2$, as when they cover
only paths of length $1$.

\citet{martinez2019failures} study a variant of AaC in which there
are two companies---one \textit{known} to be of high value and the
other of low value---to which the subject offers a single common
price. This leads to the DAG $2\leftarrow1\rightarrow3,$ where $1$
again represents the price offered while $2$ and $3$ now represent
the acceptance decisions of the two companies. In contrast to AaC,
limited-propagation beliefs make correct predictions as long as flows
from the given $1$ contain the edges from it to $2$ and $3$, since
no inference through a collider is required.

The second set of experiments includes auctions \citep{kagel1987information},
strategic voting \citep{esponda2014hypothetical,esponda2024contingent},
learning from prices \citep{ngangoue2021learning}, and dynamic public
good provision \citep{calford2024contingent}. In each, choosing correctly
requires the subject to extrapolate from a variable that depends on
other players' actions (the winning bid, election winner, market price,
or whether others contributed) to the others' private information.
Behavior moves toward the rational benchmark when the relevant variable
is observed before the subject acts, and away from it when that variable
is hypothetical. \citet{kagel1987information} document distinct behavior
in second-price auctions and in strategically equivalent English auctions,
and \citet{levin1996Revenue} show that first-price auctions raise
more revenue than English auctions in a common-values setting when
bidders suffer from the winner's curse. \citet{esponda2014hypothetical}
show that voters in a committee largely fail to condition on being
pivotal but improve substantially when the others' votes are observed
first. \citet{ngangoue2021learning} find that traders extract information
from the realized asset price but not from the hypothetical price
when that price depends on the other trader's order. \citet{calford2024contingent}
compare static and dynamic public goods provision, and find that the
provision rate is closer to the Nash equilibrium rate in the latter
setting.

All four classes of experiments correspond to models with DAGs of
the form

\[
3\leftarrow1\rightarrow4\rightarrow5\leftarrow2
\]
where $2$ and $3$ are givens, with minor variations.\footnote{In \citet{ngangoue2021learning}, there is an additional edge $4\rightarrow2$.
The three-player setting of \citet{calford2024contingent} requires
$4$ to be the pair of actions of the other two players, or to incorporate
an additional path $1\rightarrow6\rightarrow5$, where $6$ is player
$3$'s contribution; our static framework necessarily abstracts from
some of the dynamic considerations important for their identification
strategy. In auctions where $1$ is the type of the other player,
the edge $3\leftarrow1$ can be dropped. None affect the below discussion.
In all except \citet{ngangoue2021learning}, one can also include
the edge $3\rightarrow2$. Since $\{3,2\}\subset G$ and $3$ blocks
any path from 2 that uses this link, this is immaterial.} The interpretation in auction is that $1$ is the type of the other
player or the common value, $4$ the bid of the other, $3$ the type
of the subject, $2$ the action of the subject, and $5$ the allocation/transfer.
The interpretation in jury voting is that $1$ is a common value,
$4$ the vote of the other, $3$ the signal of the subject (which
Esponda and Vespa leave out to make their point more cleanly), $2$
the vote of the subject, and $5$ the election winner. The interpretation
in markets is that $1$ is the asset's common value, $4$ the action
of the other, $3$ the signal of the subject, $2$ the price faced
by the other player, and $5$ the price faced by the subject. The
interpretation in public good game is that $1$ the value of the public
good to the subject, $3$ is the subject's signal about its value,
$4$ the contribution of the other(s), $5$ whether the public good
is provided, and $2$ the contribution of the subject.

In each case, the subject needs to form beliefs about the joint distribution
of $X_{1}$ and $X_{5}$ (as well as $X_{4}$ in a second-price auction)
in order to choose the correct action. Consider limited-propagation
beliefs with no contingent thinking and flows from givens that contain
only paths of length 2 or less. The order is immaterial. Then,
\begin{align*}
\hat{p}\left(x_{1},x_{4},x_{5}|x_{2},x_{3}\right) & =p\left(x_{5}|x_{2}\right)p\left(x_{4}|x_{2},x_{3}\right)p\left(x_{1}|x_{3}\right)\\
\hat{p}\left(x_{1},x_{5}|x_{2},x_{3},x_{4}\right) & =p\left(x_{5}|x_{2},x_{4}\right)p\left(x_{1}|x_{3},x_{4}\right)=p\left(x_{1},x_{5}|x_{2},x_{3},x_{4}\right)
\end{align*}
The beliefs about $X_{1}$ and $X_{5}$ are incorrect when the givens
are $G=\{2,3\}$ because they understate the correlation between $1$
and $5$ implied by the path $1\rightarrow4\rightarrow5$. However,
they are correct when the givens are $G^{\prime}=\{2,3,4\}$ (or $G'=\left\{ 2,3,5\right\} $
in the market experiment). Explicitly observing the bid, vote, price,
or contribution causes the subject to incorporate the otherwise-omitted
paths.

This pattern accords with the experimental evidence: across all four
settings, subjects' play moves closer to the rational benchmark in
treatments where the other player's action is observed rather than
hypothetical. For example, Section IV of \citet{li2017obviously}
shows that play is closer to rational in ascending auctions than in
strategically equivalent sealed-bid second-price auctions. In the
present framework, the former corresponds to limited-propagation beliefs
with $4$ (the other's bid) included in the givens, while the latter
corresponds to having only $2$ and $3$ given.

\subsection{New Experiments? Kakuro\label{subsec:Kakuro}}

In this subsection we propose a direction for new experiments on contingent
thinking, using the recreational puzzle game Kakuro, also called Cross-Sum,
as a template. The player's objective in this game is to fill a crossword-puzzle-like
grid with numbers (Figure \ref{fig:kakuro}). As in a crossword, each
row or column corresponds to a clue, which indicates the sum of the
numbers in that row or column.\footnote{We ignore other rules, such as that the numbers cannot repeat within
a row/column.} The puzzles typically have a unique solution, logically deducible
from the clues alone. Empirically, difficulty varies across puzzles
and depends, among other things, on whether solving requires contingent
reasoning or long chains of interdependent thinking. We will show
how the model enables us to capture these gradations.

\begin{figure}
\caption{\label{fig:kakuro}A Kakuro puzzle}

\includegraphics[scale=0.4]{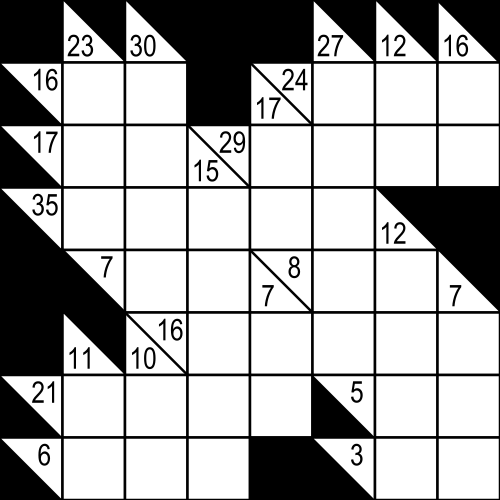}

(c) 2007 by Octahedron80, \url{https://commons.wikimedia.org/wiki/File:Kakuro_black_box.svg}
\end{figure}

Consider a simple Kakuro grid with one non-trivial row and one non-trivial
column:
\[
\begin{array}{ccc}
 & c_{1} & c_{2}\\
r_{1} & s_{11} & s_{12}\\
r_{2} & s_{21} & \blacksquare
\end{array}
\]
The variable $s_{ij}$ corresponds to the solution to the cell in
row $i$ and column $j$; $r_{i}$ to the sum of the values in row
$i$; and $c_{j}$ to the sum of the values in column $j$. Assume
that $r_{1}$ and $c_{1}$ are known and observed by the DM, that
$r_{2}$ and $c_{2}$ are unobserved, and that $s_{11},s_{12},s_{22}$
are mutually independent. The DM's task is to make a guess about the
values of the $s_{ij}$ variables.

Consider a hint that reveals the value of one of the cells. Knowing
the value of any of the cells should be equally valuable to a Bayesian.
Combined with the clues $r_{1}$ and $c_{1}$, learning any $s_{ij}$
leaves a system of two equations with two unknowns: $s_{11}+s_{12}=r_{1}$
and $s_{21}+s_{11}=c_{1}$. This leaves no residual uncertainty. However,
the hint ``$x_{11}=k$'' is intuitively more valuable than the hint
``$x_{21}=k$'' or ``$x_{12}=k$'' because using it requires fewer
steps of contingent reasoning.

Under the same specification of flows as in our auction example, limited-propagation
beliefs make this prediction. Notice that $p$ is consistent with
DAG 
\[
s_{21}\rightarrow c_{1}\leftarrow s_{11}\rightarrow r_{1}\leftarrow s_{12}.
\]
Again, assume that flows consist of paths no longer than $2$ from
given nodes, and are empty for hypothetical nodes, and fix any admissible
order. 

First, consider the hint ``$s_{11}=k$.'' In this case, $G=\{c_{1},r_{1},s_{11}\}$;
$F_{s_{11}\rightarrow s_{21}}=s_{21}\rightarrow c_{1}\leftarrow s_{11}$;
$F_{c_{1}\rightarrow s_{21}}=s_{21}\rightarrow c_{1}$; and there
are no other flows into $s_{21}$. Therefore, $R_{s_{21}}$ is $s_{21}\rightarrow c_{1}\leftarrow s_{11}$,
which induces the correct conditional belief that $s_{21}=c_{1}-k$
for sure. By the same logic, the DM believes correctly that $s_{12}=r_{1}-k$
for sure.

Second, consider the hint ``$s_{12}=k$.'' In this case, $G=\{c_{1},r_{1},s_{12}\}$,
so $s_{11}$ is now a hypothetical variable, not a given one. This
implies that $F_{c_{1}\rightarrow s_{21}}=s_{21}\rightarrow c_{1}$
and $F_{j\rightarrow s_{21}}=\emptyset$ for all $j\neq c_{1}$, since
inference about $s_{21}$ from $r_{1}$ or $s_{12}$ requires more
than two steps, and since there are no inferences from the hypothetical
variable $s_{11}$. Therefore, the DM updates her beliefs about $s_{21}$
to $p(s_{21}\mid c_{1})$; her beliefs do not depend on either $r_{1}$
or the hint $s_{12}$. Even though she updates correctly about $s_{11}$,
she does not about the variable that requires more steps of inference.
Consequently, the hint ``$s_{12}=k$'' is \textit{less valuable}
than the hint ``$s_{11}=k$.''

Extending this analysis to a probabilistic setting with no certain
logical implications is straightforward---unlike other models of
deviations from logical omniscience \citep[e.g.,][]{lipman1999decision,jakobsen2020complex,piermont2026failures}.
This is easily seen by either adding noise to the summation formulas
for $r_{1}$ and $c_{1}$ (e.g., $r_{1}=s_{11}+s_{12}+\varepsilon$),
or by expanding the rows and columns so that each include an extra
cell (with values $s_{31}$ and $s_{13}$). In either case, Bayesian
beliefs are non-degenerate even after the hint. Nevertheless, limited-propagation
beliefs induced by the above rules for flows still imply that the
second hint is less valuable than the first. For example, under the
extra-cells extension,
\[
\hat{p}(s_{21},s_{12},s_{13},s_{31}\mid s_{11},r_{1},c_{1})=p(s_{21}\mid s_{11},c_{1})p(s_{12}\mid s_{11},r_{1})p(s_{13}\mid s_{11},r_{1})p(s_{31}\mid s_{11},c_{1})
\]
whereas
\[
\hat{p}(s_{21},s_{11},s_{13},s_{31}\mid s_{12},r_{1},c_{1})=p(s_{21}\mid c_{1})p(s_{11}\mid s_{12},r_{1})p(s_{31}\mid c_{1})p(s_{13}\mid s_{12},r_{1}).
\]
Both formulas partially neglect the correlation between $s_{21}$
and $s_{31}$, yet under an ex-ante symmetric prior, conditioning
on $s_{11}$ reduces uncertainty more than conditioning on $s_{12}$.
Thus, a single mechanism accounts for both absence of logical omniscience
and probabilistic mistakes.

Both deterministic and probabilistic versions of Kakuro provide a
possible template for future experiments on contingent thinking. Under
natural structure on the flows, they generate predictions regarding
the comparative difficulty of different puzzles and the value of various
``hints.'' As we show in Section \ref{subsec:social-learning},
data-generating processes having similar causal structure lend themselves
to economic applications.

\section{Results \label{sec:general analysis}}

In this section, we collect general properties of limited-propagation
beliefs. We begin by exploring the connection between limited-propagation
beliefs and other models of boundedly rational belief formation. We
then provide necessary and sufficient conditions for limited-propagation
beliefs to be correct. Next, we explore how our framework captures
limited contingent thinking. In our model, absence of contingent thinking
implies correlation neglect: variables that should be correlated conditional
on the givens are perceived as independent. Moreover, beliefs violate
iterated expectations, which can lead to failures of dynamic consistency
in a decision tree. We conclude the section with an exploration of
the representation's testable implications.

\subsection{Relationship with other non-Bayesian beliefs\label{subsec:Relationship-with-other}}

\citet{spiegler2016bayesian} argues that several existing models
of boundedly rational inference, including \citet{eyster2005cursed},
\citet{jehiel2008revisiting} and \citet{esponda2008behavioral},
can be represented as if updating a subjective belief that factorizes
the objective distribution according to a fixed misspecified DAG.
For instance, fully cursed inference by player 1 links all other player's
actions to player 1's type instead of their own. We say that limited-propagation
beliefs (under some specification of $(G,F,\succ)$) have a \emph{single-DAG
representation} if there is a DAG $Q$ such that $\hat{p}(x_{N-G}\mid x_{G})=p_{Q}(x_{N-G}\mid x_{G})$
for every $x_{N}$.

To see that limited-propagation beliefs cannot always be represented
as the factorization according to some fixed DAG, recall the example
in Section \ref{subsec:A-multi-DAG-representation}.
\begin{prop}
\label{prop: no DAG example}With the givens and flows in Section
\ref{subsec:A-multi-DAG-representation}, limited-propagation beliefs
lack a single-DAG representation. That is, for any DAG $Q$ there
exists $p$ that is consistent with $R$, such that $\hat{p}\left(x_{3},x_{4},x_{5}\mid x_{1},x_{2}\right)\neq p_{Q}\left(x_{3},x_{4},x_{5}\mid x_{1},x_{2}\right)$
for some $x$.
\end{prop}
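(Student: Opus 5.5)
The plan is to argue by contradiction. Suppose some DAG $Q$ on $\{1,\dots,5\}$ satisfies $\hat{p}(\cdot\mid x_{1},x_{2})=p_{Q}(\cdot\mid x_{1},x_{2})$ for \emph{every} $p$ consistent with $R$. I will test this identity against a small family of binary distributions consistent with $R$. Each test forces a structural feature of $Q$, and the features will turn out to be incompatible.

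First I make the belief explicit. In $R_{5}$ the only edges are $3\rightarrow1$, $3\rightarrow5$, $2\rightarrow4$ and $4\rightarrow5$. Hence
\[
\hat{p}(x_{3},x_{4},x_{5}\mid x_{1},x_{2})=p(x_{3}\mid x_{1})\,p(x_{4}\mid x_{1},x_{2})\sum_{k}p(X_{3}=k\mid x_{1})\,p(x_{5}\mid X_{3}=k,x_{4}),
\]
using $p(x_{3}\mid x_{1},x_{2})=p(x_{3}\mid x_{1})$ under $R$. Two qualitative features matter. Under $\hat{p}$, $X_{3}$ is independent of $(X_{4},X_{5})$ given $(x_{1},x_{2})$. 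Yet $X_{5}$ still depends on $x_{1}$ given $(x_{2},x_{4})$, through the mixing weights $p(k\mid x_{1})$.

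The test distributions are chosen to force edges or non-edges in $Q$. Each choice of $p$ and each conditional (in)dependence of $\hat{p}$ yields a d-separation statement that $Q$ must satisfy.

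\textbf{Test 1.} Let $X_{3}$ be a fair coin, $X_{1}$ a noisy copy of $X_{3}$, $X_{2}$ an independent coin, $X_{4}=X_{3}\oplus X_{2}$ with small noise, and $X_{5}$ independent noise. Under $\hat{p}$, $X_{3}$ and $X_{4}$ are independent given $(x_{1},x_{2})$ with the correct marginals. Under $p$ they are almost perfectly dependent. Thus $p_{Q}$ must break the $3$--$4$ dependence while reproducing $p(x_{4}\mid x_{1},x_{2})$ and $p(x_{3}\mid x_{1})$. This forces $3$ and $4$ to be non-adjacent in $Q$ and d-separated given $\{1,2\}$. It also forces $4$ to remain d-connected to both $1$ and $2$, since $p(x_{4}\mid x_{1},x_{2})$ genuinely depends on both. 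In particular $4$ cannot simply be a child of $2$ alone.

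\textbf{Test 2.} Let $X_{5}=X_{3}\oplus X_{4}$ with small noise, with $X_{1}$ informative about $X_{3}$. Then $\hat{p}(x_{5}\mid x_{1},x_{2},x_{4})$ depends on $x_{4}$ and $x_{1}$, but $\hat{p}(x_{5}\mid x_{1},\dots,x_{4})$ does not depend on $x_{3}$. This forces $3$ to be d-separated from $5$ given $\{1,2,4\}$ in $Q$. It also forces $5$ to be d-connected to $4$ and to $1$ given the rest. Moreover, the exact value of the mixture $\sum_{k}p(k\mid x_{1})p(x_{5}\mid k,x_{4})$ must be reproduced by $p_{Q}$ using only true marginals of $p$ over parent sets in $Q$.

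\textbf{Test 3.} If needed, a third distribution rules out residual candidates, for instance a $Q$ containing $1\rightarrow5\leftarrow4$ with $3$ disconnected from $5$. I would take $X_{1}$ to be a noisy but non-deterministic signal of $X_{3}$. Then the true $p(x_{5}\mid x_{1},x_{4})$, which is what such a $Q$ delivers through the factor $p(x_{5}\mid x_{1},x_{4})$, differs from the $\hat{p}$ mixture. The reason is that under $R$, $X_{4}$ is informative about $X_{3}$ beyond $X_{1}$, so $p(k\mid x_{1},x_{4})\neq p(k\mid x_{1})$. The general principle is that any $Q$ under which $p_{Q}(x_{5}\mid\cdot)$ is a true conditional probability $p(x_{5}\mid x_{S})$ uses the correct posterior over $3$. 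But $\hat{p}$ uses the "wrong" weights $p(k\mid x_{1})$, which ignore $x_{4}$ and $x_{2}$. So $Q$ would have to marginalize $3$ out in exactly that mismatched way. That in turn requires $Q$ to contain a path making $3$ depend on $1$ only while $5$ depends on $3$ and $4$, which Test 1 and the non-adjacency constraints forbid.

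The main obstacle is making the case analysis over all DAGs $Q$ clean rather than enumerating them. I would organize it through the two necessary conditions derived above: $3\perp_{Q}4\mid\{1,2\}$ and $3\perp_{Q}5\mid\{1,2,4\}$. Together with the required d-connections, these pin down the possible parent sets of $4$ and $5$ in $Q$, as subsets of $\{1,2\}$ and of $\{1,2,4\}$ respectively. For each remaining candidate I would exhibit the explicit binary distribution of Test 3 on which the numerical values differ at one point $x$.
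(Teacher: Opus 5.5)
Your explicit form of $\hat p$ is correct. Refuting each $Q$ with $Q$-dependent test distributions is also logically adequate for the statement as written. The gap is in how you constrain $Q$.

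In Tests 1 and 2 you pass from ``$p_Q$ must exhibit $3\perp4\mid\{1,2\}$ (resp.\ $3\perp5\mid\{1,2,4\}$) for this $p$'' to d-separation, and even non-adjacency, in $Q$. That is the faithfulness direction, and nothing makes $p_Q$ faithful to $Q$; XOR constructions are the textbook unfaithful cases. In Test 1, $X_2$ is a fair coin independent of $(X_1,X_3,X_5)$ and $X_4\approx X_3\oplus X_2$. Then $p(x_4\mid x_3)$, $p(x_4\mid x_1,x_3)$ and $p(x_4\mid x_3,x_5)$ are exactly uniform. So take a $Q$ with $Q(4)=\{3\}$ and $4$ a sink: $3$ and $4$ are adjacent, yet $p_Q$ makes $X_4$ independent of everything. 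Such a $Q$ still fails, but only because $p(x_4\mid x_1,x_2)$ is not uniform, not for the reason you give. Your two d-separation constraints are therefore not established.

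Even granting them, the reduction is incomplete: $5\in Q(4)$ is not excluded, and edges from $\{3,4,5\}$ into $\{1,2\}$ are unconstrained. The principle in Test 3 applies only when $p_Q(x_5\mid x_1,x_2,x_4)$ is a single objective conditional $p(x_5\mid x_S)$. When $5$ has children in $Q$ (e.g.\ $5\to1$, $5\to2$ or $5\to4$), it is a normalized product of several conditionals, and your argument says nothing about it. The step ``for each remaining candidate I would exhibit a distribution'' is the entire content of the proposition, and it is left undone.

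The missing idea is to stop trying to identify $Q$ and instead make your Test 3 intuition bite uniformly. The paper does this through the public-good example (Proposition~\ref{prop:public good}):
\begin{itemize}
\item Take one near-deterministic $p$, with $X_1$ a non-degenerate noisy signal of $X_3$, in which $X_5=1$ only if $(X_3,X_4)=(0,1)$ and $X_4=1$ only if $X_3=X_2=1$, up to noise $\varepsilon$.
\item Consider the event $X_3=X_4=X_5=1$ at $x_1=x_2=1$. For every $Q$, $p_Q$ at this point is a product containing the factor $p(X_5=1\mid x_{Q(5)})$, which is an objective conditional. Splitting on whether $Q(5)$ contains $3$ and/or $4$ shows this factor is $O(\varepsilon)$.
\item The normalizer $p_Q(x_1,x_2)$ is at least $p(x^0)^5$ for a typical configuration such as $x^0=(1,1,1,1,0)$, because every objective conditional evaluated at sub-configurations of $x^0$ is at least $p(x^0)$. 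So $p_Q$ gives the event vanishing conditional probability.
\item Meanwhile $\hat p$ gives the event probability roughly $p(X_3=1\mid x_1)\,p(X_4=1\mid x_1,x_2)\,p(X_3=0\mid x_1)$, which is bounded away from zero. This is precisely because of the stale weights $p(k\mid x_1)$ you identified.
\end{itemize}
This disposes of every $Q$ at once with a single $p$, including those in which $5$ has children.
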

The result will follow from the application in Section \ref{example public good},
especially Proposition \ref{prop:public good}. A formal proof is
thus omitted. For an intuition, consider a distribution $p$ so that
$X_{5}=1$ only if $X_{3}=0$ and $X_{4}=1$, while $X_{4}=1$ only
if $X_{3}=1$ and $X_{2}=1$. Limited-propagation beliefs may assign
positive probability to $\left(X_{3},X_{4},X_{5}\right)=\left(1,1,1\right)$
because they propagate information to $X_{5}$ from $X_{3}$ only
indirectly, through the flow from 1 to $5$. Therefore, they may neglect
the negative correlation between $X_{3}$ and $X_{5}$ given $X_{4}$.
However, single-DAG beliefs never will give positive probability to
that event. Intuitively, if both $3$ and $4$ cause 5, then the DM
correctly updates $5$ conditional $3$ and $4$ and rules out all
equal being to $1$. If she leaves out either 3 or 4 from the causes
of 5, then she infers the probability of the omitted causes and rules
out $X_{5}=1$. For example, if $3$ but not 4 causes 5, then she
correctly updates that $X_{5}=0$ when $X_{3}=0$, since $X_{4}=0$
whenever $X_{3}=0$. 

We now ask under what conditions limited-propagation beliefs have
a single DAG representation. To state the result, let $F^{*}_{i\rightarrow j}$
be the set of all $R$-paths from $i$ to $j$, and say that flows
satisfy the \emph{all-or-nothing }property if for every $i\neq j$,
$F_{i\rightarrow j}\in\{F^{*}_{i\rightarrow j},\emptyset\}$. The
property means that if information flows from $i$ to $j$ at all,
then it travels through \textit{all} the paths in $R$ from $i$ to
$j$. 
\begin{prop}
\label{prop: polytree} If flows satisfy the all-or-nothing property,
then $\hat{p}(x_{N-G}\mid x_{G})\equiv p_{Q}(x_{N-G}\mid x_{G})$,
where $Q$ is the DAG defined by $Q(i)\equiv M(i)$ for every i.
\end{prop}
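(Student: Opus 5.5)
The plan is to reduce the statement to a single local claim: under the all-or-nothing property, for every $i\in N-G$,
\[
p_{R_{i}}\left(x_{i}\mid x_{M(i)}\right)=p\left(x_{i}\mid x_{M(i)}\right).
\]
Given this claim, the rest is bookkeeping. Define $Q(i)=M(i)$ for every node, extending the definition of $M(i)$ to nodes of $G$ in the obvious way. $Q$ is acyclic because every element of $M(i)$ precedes $i$ in the linear order $\succ$. Since the nodes of $G$ come first in $\succ$, for every $g\in G$ we have $Q(g)\subset G$. Hence in $p_{Q}(x_{N})=\prod_{k}p(x_{k}\mid x_{Q(k)})$ the $G$-factors involve only $x_{G}$, and their product is exactly $p_{Q}(x_{G})$; this uses that the $G$-block is ancestrally closed in $Q$. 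Dividing by $p_{Q}(x_{G})$ then gives $p_{Q}(x_{N-G}\mid x_{G})=\prod_{i\in N-G}p(x_{i}\mid x_{M(i)})$. By the claim, this equals the right-hand side of (\ref{eq: subjective beliefs}), which proves the proposition.

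To prove the claim, fix $i\in N-G$ and let $J_{i}=\{j\neq i\mid F_{j\rightarrow i}\neq\emptyset\}$. By all-or-nothing, $R_{i}=\bigcup_{j\in J_{i}}F^{*}_{j\rightarrow i}$. Let $S$ be the set of nodes lying on some $R$-path from some $j\in J_{i}$ to $i$. I will show that the marginal of $p_{R_{i}}$ on $S$ coincides with the marginal of $p$ on $S$. Since $M(i)\cup\{i\}\subset S$, the claim then follows by conditioning.

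The argument has two steps.

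\emph{Step 1: $R_{i}$ is the full induced subgraph of $R$ on $S$.} If $a,b\in S$ are $R$-adjacent, I want the edge between them to lie on some $j$-to-$i$ path. Start from the paths through $a$ and through $b$, splice them together using the edge $\{a,b\}$, and shortcut any repeated nodes.

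\emph{Step 2: every connected component $C$ of the skeleton of $R$ restricted to $N-S$ is attached to $S$ through at most one node.} If $C$ had two distinct attachment nodes $a,b\in S$, then routing a $j$-to-$i$ path through $C$ between $a$ and $b$ would place nodes of $C$ on a path from $J_{i}$ to $i$, contradicting $C\cap S=\emptyset$.

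With Steps 1 and 2 in hand, I would show $p(x_{S})=\prod_{k\in S}p(x_{k}\mid x_{R(k)\cap S})$. The idea is to sum out the components $C$ one at a time, each time removing a sink-most node of $C$. Each $C$ hangs off a single node $s$ of $S$. Marginalizing $C$ out of $p=p_{R}$ therefore only modifies the factor of $s$: it replaces $p(x_{s}\mid x_{R(s)})$ by $p(x_{s}\mid x_{R(s)\cap S})$ if $C$ contains parents of $s$, and leaves everything unchanged if $C$ lies entirely downstream of $s$. Here $p(x_{s}\mid x_{R(s)\cap S})$ is computed from $p$, which is exactly the factor appearing in $p_{R_{i}}$.

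I expect Step 2 and the splicing in Step 1 to be the main obstacle. Concatenating two paths and then shortcutting might destroy the property that the resulting path still runs from some element of $J_{i}$ to $i$ through the desired node. I would handle this with a case analysis on where the two paths first intersect. I would also use the betweenness assumption on flows: any node reached on a flow to $i$ receives that portion of the flow. This enlarges the set of admissible endpoints and gives the flexibility needed to reroute.

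If the splicing argument proves too delicate, the fallback is to prove the marginal statement by induction along $\succ$ using the Variable Elimination procedure of Section \ref{sec:Procedure}. The point would be that eliminating any node outside $S$ never creates a factor coupling two distinct nodes of $S$.
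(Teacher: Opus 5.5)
Your reduction is the same as the paper's. Both show that $p_{R_i}$ and $p$ have the same marginal on $S=N(R_i)$ and then condition on $x_{M(i)}$. Your Step 1 is exactly the paper's claim $R(j)\cap\{1,\dots,j\}=R_i(j)$. After that the routes differ. The paper runs a forward induction along a topological order of $N(R_i)$. At each node $j$ it averages out the outside parents $L=R(j)\setminus R_i(j)$, using $X_L\perp X_{\{1,\dots,j-1\}}$, which it gets by building a flow path through some $l\in L$ and invoking all-or-nothing. You instead isolate a structural lemma (Step 2) and then prune pendant components. That organization is cleaner, since the paper's independence argument is essentially Step 2 in disguise. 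But it puts all the weight on Steps 1 and 2, and as sketched these are not proved.

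\textbf{Steps 1 and 2.} The tool you plan to use does not work. Betweenness says the $j$-to-$k$ portion of a path in $F_{j\rightarrow i}$ lies in $F_{j\rightarrow k}$. It says nothing about $F_{k\rightarrow i}$, so it never makes $k$ a source of flow into $i$; the paper deliberately declines that property. The sources must stay in $J_i$, and what you need is a 2-connectivity argument:
\begin{itemize}
\item In the block--cut tree of the skeleton of $R$, the vertices and edges on simple $j$--$i$ paths are exactly those of the blocks on the tree path from $j$ to $i$. This holds because inside a 2-connected block every vertex and edge lies on a simple path between any two distinct vertices (fan lemma).
\item The union $T'$ of these tree paths over $j\in J_i$ is a subtree.
\item Step 1: suppose $a,b\in S$ are adjacent via an edge in a block $B\notin T'$, and let $B_a,B_b\in T'$ be blocks containing $a,b$. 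Then $a,b$ are cut vertices, and $B_a,a,B,b,B_b$ would be the tree path between two nodes of $T'$ (or a cycle if $B_a=B_b$) passing outside $T'$. This is impossible.
\item Step 2: a path through $C$ from $a$ to $b$, plus a path from $b$ to $a$ in the connected graph $R_i$, forms a cycle. The cycle lies in one block, which contains an $R_i$-edge and hence belongs to $T'$. That puts $C$'s nodes in $S$, a contradiction.
\end{itemize}
Your ``routing'' argument also presupposes a simple $j$--$i$ path visiting both $a$ and $b$. No such path need exist when $a$ and $b$ lie on paths from different sources.

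\textbf{Pruning.} The pruning step also needs an independence fact you did not state. Repeatedly deleting sinks removes only $C$ minus the ancestors of $s$. The remainder $A$ is ancestrally closed, and none of its nodes is a sink, so it cannot be eliminated node by node into conditional factors. To get $\sum_{x_A}p(x_A)\,p(x_s\mid x_{R(s)})=p(x_s\mid x_{R(s)\setminus C})$ you need $X_A\perp X_{R(s)\setminus C}$. This holds because any trek between them would have to leave $C$ through $s$, making $s$ an ancestor of its own parent.

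With these two repairs your proof goes through. The paper's proof simply concatenates paths at the corresponding points, which is legitimate only if paths may repeat nodes. The same block--cut argument is what makes it rigorous for simple paths.
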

In many of our examples, the true DAG $R$ is a \emph{polytree}: there
is at most one path between any pair of nodes. Any specification of
the flows for a polytree satisfies the all-or-nothing property. Consequently,
limited-propagation beliefs have a single DAG representation in those
examples. Crucially, flows do \textit{not} satisfy the all-or-nothing
property in the example of Section \ref{subsec:A-multi-DAG-representation};
e.g., the flow from $1$ to $5$ leaves out the path $1\rightarrow3\rightarrow4\rightarrow5$.
\begin{proof}[Proof of Proposition \ref{prop: polytree}]
We begin with two observations that hold for any $F$. First, the
directed graph defined by $Q(i)\equiv M(i)$ is acyclic, since $j\in M(i)$
only if $i\succ j$. Second, $M(i)\cup\{i\}\subseteq N(R_{i})$, where
$N(R_{i})$ is the set of nodes in $R_{i}$. Therefore, if we show
that $p_{R_{i}}(x_{N(R_{i})})=p(x_{N(R_{i})})$ for every $i\in N-G$,
it will immediately follow that $p_{R_{i}}(x_{i}\mid x_{M(i)})=p(x_{i}\mid x_{M(i)})$.

Pick any $i\in N-G$, and WLOG, reorder all nodes so that $N^{*}\equiv\left\{ 1,\dots,m\right\} =N(R_{i})$
and for each $j\leq m$, $\{1,\dots,j-1\}$ contains no descendants
of $j$. We show that $R(j)\cap\{1,\dots,j\}=R_{i}(j)$ for any $j$.
First, if $k\in R_{i}(j)$, then $kRj$ and $k$ is an ancestor of
$j$. Therefore, $R_{i}(j)\subset R(j)\cap\{1,\dots,j\}$. Second,
if $k\in R(j)\cap\{1,\dots,j\}$, then by construction, there exists
$j^{*}$ so that $k$ is in a path in $F_{j^{*}\rightarrow i}$ .
Since $j\in N^{*}$, there is a path in $R$ from $j$ to $i$. Then,
the path from $j^{*}$ to $k$ combined with $(k,j)$ and the path
from $j$ to $i$ yields a path from $j^{*}$ to $i$. This path must
be contained in $F_{j^{*}\rightarrow i}$ by the all-or-nothing property,
so $(k,j)\in R_{i}$.

Clearly $p_{R_{i}}(x_{1})=p(x_{1})$. Assume that $p(x_{1},\dots,x_{j-1})=p_{R_{i}}(x_{1},\dots,x_{j-1})$.
Let $L=R(j)-R_{i}(j)$ be the latent parents of $j$. Then, 
\begin{align*}
p\left(x_{1},\dots,x_{j}\right)= & \sum_{x_{L}}p\left(x_{j}|x_{R(j)}\right)p\left(x_{1},\dots,x_{j-1},x_{L}\right)\\
= & \sum_{x_{L}}p\left(x_{j}|x_{R_{i}(j)},x_{L}\right)p\left(x_{1},\dots,x_{j-1},x_{L}\right)
\end{align*}
We claim that $X_{L}\perp X_{\{1,\dots,j-1\}}$. If not, then they
have a common ancestor $k$, and given how $N^{*}$ is ordered, $k\notin\left\{ j,\dots,m^{*}\right\} $.
Let $j^{\prime}<j$ and $l\in L$ be descendants of $k$, and pick
$j'$ to be the first such descendant. Then, there is a path in $R$
between $j'<j$ and $l\in L$ that goes through $k$ and does not
intersect $\left\{ j,\dots,m^{*}\right\} $. By construction, there
exists $j^{*}$ so that $j^{\prime}$ lies on a path in $F_{j^{*}\rightarrow i}$
. Create a new path from $j^{*}$ to $i$, which combines that path
from $j^{*}$ to $j^{\prime}$, the path from $j^{\prime}$ to $l$
that does not include $j$, the link $(l,j)$, and a path from $j$
to $i$. This constructed path from $j^{*}$ to $i$ includes $l$.
By the all-or-nothing property, it must be contained in $F_{j^{*}\rightarrow i}$,
contradicting $l\in L$. 

Given the independence, $p\left(x_{1},\dots,x_{j-1},x_{L}\right)=p\left(x_{1},\dots,x_{j-1}\right)p\left(x_{L}\right)$,
so
\begin{align*}
p\left(x_{1},\dots,x_{j}\right)= & \left(\sum_{x_{L}}p\left(x_{j}|x_{R_{i}(j)},x_{L}\right)p\left(x_{L}\right)\right)p\left(x_{\{1,\dots,j-1\}}\right)\\
= & p\left(x_{j}|x_{R_{i}(j)}\right)p_{R_{i}}\left(x_{\{1,\dots,j-1\}}\right)=p_{R_{i}}\left(x_{\{1,\dots,j\}}\right).
\end{align*}
Inductively, $p_{R_{i}}\left(x_{N^{*}}\right)=p\left(x_{N^{*}}\right)$.
Therefore, $p_{R_{i}}\left(x_{N}\right)=p\left(x_{N^{*}}\right)\prod_{i\notin N}p(x_{i})$,
since $R_{i}(i)=\emptyset$ for all $i\notin N^{*}$. 
\end{proof}

\subsection{Are richer flows always better?}

We leverage Proposition \ref{prop: polytree} to show by example that
more flows need not lead to better decisions. Suppose that a DM with
limited-propagation beliefs chooses an action to maximize expected
utility with expectation taken with respect to her limited-propagation
beliefs. The action is not represented by a node in the DAG. An outside
observer evaluates the DM's decision quality according to her ex-ante
expected utility according to the true distribution $p$.

Let $N=\{1,2,3,4\}$ and $G=\emptyset$. The true DAG $R$ is complete,
such that $(i,j)\in R$ if and only if $i<j$. Suppose $4\succ3\succ2\succ1$,
and compare two flow specifications, $F^{1}$ and $F^{2}$. Under
$F^{1}$, there no flows that involve node $3$; $F^{1}_{1\rightarrow2}=1\rightarrow2$;
$F^{1}_{2\rightarrow4}=2\rightarrow4$; and $F^{1}_{1\rightarrow4}=1\rightarrow2\rightarrow4$.
This specification induces a single-DAG representation, where the
DAG $R^{1}$ coincides with $F^{1}_{1\rightarrow4}=1\rightarrow2\rightarrow4$.
The specification $F^{2}$ \textit{adds} the path $1\rightarrow3\rightarrow4$
to $F_{1\rightarrow4}$, and also adds the paths $F^{2}_{1\rightarrow3}=1\rightarrow3$
and $F^{2}_{3\rightarrow4}=3\rightarrow4$. This specification, too,
induces a single-DAG representation, where the DAG $R^{2}$ adds the
path $1\rightarrow3\rightarrow4$ to $R^{1}$.

Thus, under both specifications, the DM chooses her action by fitting
a subjective DAG to the true distribution $p$. While $R^{2}$ contains
$R^{1}$, neither DAG is correctly specified because both are strict
sub-graphs of $R$. Proposition 10 in \citet{spiegler2016bayesian}
shows that in such a scenario, there are objective distributions for
which $R^{1}$ will lead to higher objective expected payoffs than
$R^{2}$. Thus, the outside observer would not necessarily say that
the DM with the richer $F^{2}$ made higher quality decisions than
she would with $F^{1}$.\footnote{We can also evaluate flows by the accuracy of the beliefs they generate,
measured by, say, the Kullback-Leibler divergence with respect to
the true distribution. It is possible to construct an example of a
true distribution for which $F^{2}$ will lead to larger divergence
than $F^{1}$.}

\subsection{Rationality }

We now consider when limited-propagation beliefs agree with Bayesian
updating. The result provides a necessary condition and a slightly
stronger sufficient condition for local propagation beliefs to agree
with Bayes rule. To state it, say that the order $\succ$ \emph{extends}
$R$ if $i\succ j$ implies that $i$ is not an ancestor of $j$ according
to $R$. 
\begin{prop}
\label{prop:flows that imply bayes}(i) If there exist $i,j\notin G$
such that $(i,j)\in R-\left(F_{i\rightarrow j}\cup F_{j\rightarrow i}\right)$,
then $\hat{p}(\cdot\mid x_{G})\neq p(\cdot\mid x_{G})$ for almost
all $p$ and any order. (ii) If $\succ$ extends $R$ and $(i,j)\in F_{i\rightarrow j}$
for every $(i,j)\in R$ with $j\notin G$, then $\hat{p}(\cdot\mid x_{G})=p(\cdot\mid x_{G})$
for every $p$. 
\end{prop}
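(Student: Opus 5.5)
The two parts use different tools: (ii) is a factor-by-factor comparison using the local Markov property, and (i) is a counterexample followed by a genericity argument.

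\emph{Part (ii).} We compare the limited-propagation belief with the chain-rule form of the Bayesian posterior, $p(x_{N-G}\mid x_G)=\prod_{i\in N-G}p_R(x_i\mid x_{\{k\mid i\succ k\}})$, one factor at a time. Fix $i\in N-G$.

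First, $R(i)\subseteq M(i)$. If $k\in R(i)$ and $k\succ i$, then $k$ would be an ancestor of $i$ ranked above it, contradicting that $\succ$ extends $R$; hence $i\succ k$. The hypothesis gives $(k,i)\in F_{k\to i}$, so $F_{k\to i}\neq\emptyset$ and $k\in M(i)$. The same hypothesis puts every edge $(k,i)$, $k\in R(i)$, into $R_i$. Since $R_i\subseteq R$, the parent set of $i$ in $R_i$ is exactly $R(i)$, so the factor of $p_{R_i}$ attached to $i$ is $p(x_i\mid x_{R(i)})$.

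Second, no node preceding $i$ is a descendant of $i$: if $d$ descends from $i$ and $i\succ d$, then $i$ is an ancestor of $d$ ranked above it, again impossible. So $M(i)$, and more generally $\{k\mid i\succ k\}$, consists of non-descendants of $i$ in $R$, and hence in the sub-DAG $R_i$.

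The local Markov property of $p_{R_i}$ with respect to $R_i$ (after adding the nodes of $M(i)$ not touched by $R_i$ as isolated nodes, which only multiplies in their marginals) then gives
\[
p_{R_i}(x_i\mid x_{M(i)})=p_{R_i}(x_i\mid x_{R(i)})=p(x_i\mid x_{R(i)}).
\]
The same argument applied to $p$ and $R$ gives $p(x_i\mid x_{\{k\mid i\succ k\}})=p(x_i\mid x_{R(i)})$. The factors therefore agree for every $i\in N-G$, and the two products coincide.

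\emph{Part (i).} We construct one bad distribution and then spread the failure to almost all $p$ by genericity.

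For the construction, take $p$ consistent with $R$ in which every variable is an independent uniform draw except $X_j$, which is an almost deterministic copy of $X_i$ (ignoring its other parents). Since $i,j\notin G$, the Bayesian posterior makes $X_i$ and $X_j$ strongly correlated given $x_G$. The order is arbitrary, so suppose WLOG that $j\succ i$, and consider the factor $p_{R_j}(x_j\mid x_{M(j)})$.
\begin{itemize}
\item Under this $p$, the only dependence in the model runs through the edge $(i,j)$.
\item If that edge is absent from $R_j$, or $i\notin M(j)$, then $p_{R_j}$ makes $X_j$ independent of everything in $M(j)$, so the factor equals $p(x_j)$.
\item The product of factors then treats $X_i$ and $X_j$ as independent given $x_G$, which contradicts the Bayesian posterior.
\end{itemize}
For genericity, both $\hat p(\cdot\mid x_G)$ and $p(\cdot\mid x_G)$ are rational functions of the conditional-probability parameters of $p$ on the parameter polytope. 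Their difference is not identically zero, by the construction, so it vanishes only on a measure-zero set.

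\emph{Main obstacle.} The case split in the construction is not fully controlled by the hypothesis. It excludes $(i,j)$ from $F_{i\to j}$ and $F_{j\to i}$, but the edge might still enter $R_j$ through some other flow $F_{k\to j}$ whose path passes through $i$, while $F_{i\to j}$ is nonempty via a different path. In that case the $j$-factor could still capture the edge correctly. I would first try to use the betweenness requirement (subpaths $k\to i$ of $F_{k\to j}$ lie in $F_{k\to i}$) to show that the dependence is then lost in the $i$-factor or elsewhere. If that fails, I would choose $p$ so that $i$ and $j$ also have a correlated common ancestor. The aim would be a distribution where the missing direct edge in $F_{i\to j}$ must be reflected somewhere in the product, reproducing the correlation-neglect discrepancy.
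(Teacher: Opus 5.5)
Your part (ii) is correct and follows the paper's argument: $R(j)\subseteq M(j)$ and $R_j(j)=R(j)$, so $p_{R_j}(x_j\mid x_{M(j)})=p(x_j\mid x_{R(j)})$. You also supply the justification the paper leaves implicit: every $\succ$-predecessor of $j$ is a non-descendant, and $p_{R_j}$ has the local Markov property. For part (i), your construction (only $X_i,X_j$ dependent, everything else independent) and the polynomial-genericity step are the same as the paper's.

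The obstacle you flag is a genuine gap, and it cannot be closed. Neither betweenness nor a cleverer choice of $p$ helps, because in that case the statement itself fails. The paper's proof passes over the same point. It asserts that for this $p$, $p_{R_i}=p_{R_j}=\prod_k p(x_k)$, which requires $(i,j)\notin R_i\cup R_j$. The hypothesis only removes the edge from $F_{i\rightarrow j}\cup F_{j\rightarrow i}$.

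Here is a counterexample. Take $R=\{(k,i),(i,m),(i,j),(m,j)\}$, $G=\{k\}$, and the order $j\succ m\succ i\succ k$. Let the nonempty flows be $F_{k\rightarrow i}=\{k\rightarrow i\}$, $F_{k\rightarrow j}=\{k\rightarrow i\rightarrow j\}$, $F_{i\rightarrow m}=\{i\rightarrow m\}$, $F_{i\rightarrow j}=\{i\rightarrow m\rightarrow j\}$ and $F_{m\rightarrow j}=\{m\rightarrow j\}$, with all other flows empty. Betweenness holds, $i,j\notin G$, and $(i,j)\notin F_{i\rightarrow j}\cup F_{j\rightarrow i}$. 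However:
\begin{itemize}
\item $M(i)=\{k\}$ with $R_i=\{(k,i)\}$;
\item $M(m)=\{i\}$ with $R_m=\{(i,m)\}$;
\item $M(j)=\{k,i,m\}$ with $R_j=R$.
\end{itemize}
Hence
\[
\hat{p}(x_i,x_m,x_j\mid x_k)=p(x_i\mid x_k)\,p(x_m\mid x_i)\,p(x_j\mid x_i,x_m)=p(x_i,x_m,x_j\mid x_k)
\]
for every $p$ consistent with $R$. The edge $(i,j)$ reaches $R_j$ through $F_{k\rightarrow j}$, and $i\in M(j)$ through the other path. This is exactly the scenario you describe.

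So (i) needs a stronger hypothesis, e.g.\ $(i,j)\notin R_i\cup R_j$, or $F_{i\rightarrow j}=F_{j\rightarrow i}=\emptyset$ (as under no contingent thinking). Under either, your case split becomes exhaustive:
\begin{itemize}
\item For the $\succ$-later of the two nodes, either the edge is missing from its $R$ or the other node is missing from its $M$, so its factor is a marginal.
\item The earlier node's factor is a marginal because the later node cannot belong to its $M$.
\item Your genericity step then finishes the argument.
\end{itemize}
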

Some contingent thinking is necessary for rationality whenever updating
about more than one variable. Otherwise, the DM fails to take into
account the relationships between hypotheticals. The proposition shows
that neglecting a direct link between two hypotheticals leads to a
failure of Bayesianism, for any $p$ for which that link matters (which
is ``generically'' the case).

Conversely, sufficient contingent thinking implies Bayesian updating.
Specifically, if each variable flows to its children, then there is
an order for which the limited-propagation belief agrees with Bayesianism.
This order extends $R$. Since $G$ must be ordered first, this means
that $G$ is ancestral with respect to $R$: for every $g\in G$,
$R(g)\subseteq G$. With some contingent thinking, the DM need not
consider all possible paths through which one variable might affect
another. It is sufficient that she only considers each variable's
direct effects, provided that these are then propagated onward to
the affected variables' descendants. 
\begin{proof}[Proof of Proposition \ref{prop:flows that imply bayes}]
\textit{(i)} Suppose that there exist $i,j\notin G$ such that $(i,j)\in R-\left(F_{i\rightarrow j}\cup F_{j\rightarrow i}\right)$.
Consider $p$ so that 
\[
p\left(x_{1},\dots,x_{N}\right)=p\left(x_{i},x_{j}\right)\prod_{k\neq i,j}p\left(x_{k}\right)
\]
for all $x$, and where $j\not\perp_{p}i$. For any DAG $Q$, 
\[
p_{Q}\left(x_{1},\dots,x_{N}\right)=p\left(x_{i}\mid x_{Q(i)\cap\{j\}}\right)p\left(x_{j}\mid x_{Q(j)\cap\{i\}}\right)\prod_{k\neq i,j}p\left(x_{k}\right).
\]
In particular, for $Q=R_{i}$ or $Q=R_{j}$, $p_{Q}\left(x_{1},\dots,x_{N}\right)=\prod_{k}p\left(x_{k}\right)$.
Therefore, 
\[
\hat{p}(x_{i},x_{j}|x_{G})=p(x_{i})p(x_{j})\neq p(x_{i},x_{j})=p(x_{i},x_{j}|x_{G}).
\]

Using \citet{caron2005zero}, we can extend the inequality $\hat{p}(x_{i},x_{j}|x_{G})\neq p(x_{i},x_{j}|x_{G})$
to almost all $p$. Note that $p$ can be viewed as an element of
$\mathbb{R}^{m}$ for some $m$. The function $f(p)=\sum_{x_{N}}\left(\hat{p}(x_{N-G}\mid x_{G})-p(x_{N-G}\mid x_{G})\right)^{2}$
has the same zeros as a polynomial on $\mathbb{R}^{m}$, and $f(p)$
equals zero only if $\hat{p}(x_{N-G}|x_{G})=p(x_{N-G}|x_{G})$ for
every $x_{N}$. Caron and Traynor show that every non-zero polynomial
on $\mathbb{R}^{m}$ is non-zero almost everywhere, so almost all
$p$ satisfy $\hat{p}(\cdot|x_{G})\neq p(\cdot|x_{G})$.

\textit{(ii)} Suppose that $\succ$ extends $R$, and that for every
$(i,j)\in R$ with $j\notin G$, $(i,j)\in F_{i\rightarrow j}$. Then,
$R(j)\subset M(j)$ and $R_{j}(j)=R(j)$. Therefore, $p_{R_{j}}\left(x_{j}|x_{M(j)}\right)=p\left(x_{j}|\text{\ensuremath{x_{R(j)}}}\right)$
for all $j\in N-G$. Conclude 
\[
\hat{p}\left(x_{N-G}|x_{G}\right)=\prod_{j\notin G}p\left(x_{j}|\text{\ensuremath{x_{R(j)}}}\right)=p\left(x_{N-G}|x_{G}\right)
\]
since $p=p_{R}$ by assumption, establishing the result. 
\end{proof}

\subsection{Failures of contingent thinking\label{subsec: contingent thinking}}

For this subsection, we consider two sets of givens, $G$ and $G'=G\cup\{g^{*}\}$,
and compare the resulting limited-propagation beliefs. Conditioning
on more variables may change the implications that the DM considers.
To accommodate this, we index the flows by the set of givens, i.e.
$F^{H}_{i\rightarrow j}$ is flows from $i$ to $j$ when the givens
are $H\in\left\{ G,G^{\prime}\right\} $. The order $\succ$ is fixed
throughout. 

We say that flows \emph{exhibit no contingent thinking} if, for every
$H\in\{G,G^{\prime}\}$, $F^{H}_{i\rightarrow j}=\emptyset$ whenever
$i\notin H$. This is a strong definition that requires the DM to
take no inference steps from non-givens. It provides a clean and natural
benchmark to which one can compare full rationality and intermediate
degrees of contingent thinking.

Failure of contingent thinking immediately implies correlation neglect.
\begin{prop}
\label{prop: contingent think correlation neglect}If flows exhibit
no contingent thinking, then limited-propagation beliefs regard $i$
and $j$ as independent conditional on $G$, for every distinct $i,j\in N-G$.
\end{prop}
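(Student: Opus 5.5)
The plan is to show directly that the limited-propagation belief factorizes into a product of marginals, one for each hypothetical variable, with each factor depending only on $x_G$ and its own $x_i$. First we identify $M(i)$ under no contingent thinking. By definition, $j\in M(i)$ requires $F_{j\rightarrow i}\neq\emptyset$. No contingent thinking (with the givens $H=G$) forces $F_{j\rightarrow i}=\emptyset$ whenever $j\notin G$. Hence $M(i)\subseteq G$ for every $i\in N-G$. Concretely, $M(i)=\{g\in G\mid F_{g\rightarrow i}\neq\emptyset\}$, where we use that every $g\in G$ precedes $i$ under $\succ$.

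Substituting this into (\ref{eq: subjective beliefs}) gives
\[
\hat{p}\left(x_{N-G}\mid x_{G}\right)=\prod_{i\in N-G}p_{R_{i}}\left(x_{i}\mid x_{M(i)}\right),
\]
where each factor is a function of $x_i$ and $x_G$ only. No factor involves any other hypothetical variable $x_k$ with $k\in N-G$, $k\neq i$. Note also that $p_{R_i}$ is a genuine probability distribution on $X$, so each factor is a conditional probability in $x_i$ that sums to one.

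Next we marginalize. Fix distinct $i,j\in N-G$ and sum the product over $x_k$ for every $k\in N-G-\{i,j\}$. Since each factor depends on its own $x_k$ alone among the hypotheticals, the sum passes through the product, and each factor $p_{R_k}(x_k\mid x_{M(k)})$ sums to $1$. This yields
\[
\hat{p}(x_i,x_j\mid x_G)=p_{R_i}(x_i\mid x_{M(i)})\,p_{R_j}(x_j\mid x_{M(j)}).
\]
Summing further over $x_j$ gives $\hat{p}(x_i\mid x_G)=p_{R_i}(x_i\mid x_{M(i)})$, and symmetrically for $j$. Therefore $\hat{p}(x_i,x_j\mid x_G)=\hat{p}(x_i\mid x_G)\hat{p}(x_j\mid x_G)$, which is the claimed conditional independence. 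The same argument in fact delivers full mutual independence of all the hypotheticals given $x_G$.

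The only step needing care is the first: confirming that the conditioning sets contain no hypotheticals. This uses the no-contingent-thinking definition with $H=G$, together with the observation that $R_i$ affects only the form of each factor, not which variables it depends on. The rest is routine marginalization of a product of normalized conditionals.
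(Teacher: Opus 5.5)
Your proof is correct and follows the paper's argument. The paper's proof is the one-line observation that no contingent thinking forces $M(i)\subseteq G$, so each factor in (\ref{eq: subjective beliefs}) depends only on $x_i$ and $x_G$; you additionally write out the marginalization step, where each factor $p_{R_k}(x_k\mid x_{M(k)})$ sums to one, which the paper leaves implicit.
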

\begin{proof}
This follows from Equation (\ref{eq: subjective beliefs}) and observing
that with no contingent thinking, $M(i),M(j)\subset G$.
\end{proof}

That is, the DM neglects any correlation between $i$ and $j$ not
picked up in the information $G$. Correlation neglect has been documented
experimentally support \citep[e.g, ][]{EysterWeizsacker2010,SalantRubinstein2015,enkezimmerman2013}
and have been studied theoretically \citep[e.g., ][]{OrtolevaSnowberg2015,LevyRazin2015QJPS,EllisPiccione2017}.
The result thus provides a novel link between failures of contingent
thinking and correlation neglect. 

Second, we are interested in comparing $\hat{p}(x_{N-G^{\prime}}\mid x_{G})$
and $\hat{p}(x_{N-G^{\prime}}\mid x_{G'})$. The comparison makes
sense under the following assumptions.
\begin{assumption*}
\noindent\label{assumption}The following hold.

\noindent (1) $F^{G}_{i\rightarrow j}=F^{G'}_{i\rightarrow j}$ for
all $i\notin g^{*}$.

\noindent (2) Let $H\in\{G,G^{\prime}\}$, and suppose that $i,j\in H$
and that $F^{H}_{j\rightarrow k}$ contains a path from $i$ to $k$.
Then, this path also belongs to $F^{H}_{i\rightarrow k}$.
\end{assumption*}
Part (1) says that flows from $i$ only depend on whether or not $i$
is a given. Part (2) says that if two variables are both givens, then
a stronger version of the betweenness property must hold. Specifically,
if $i$ lies on a path from $j$ to $k$ that belongs to $F^{H}_{j\rightarrow k}$,
then the sub-path from $i$ to $k$ must belong to $F^{H}_{i\rightarrow k}$.
Recall that we emphatically did not impose this property on two arbitrary
nodes $i$ and $j$, usually to allow for cases in which $j$ is a
given while $i$ is a hypothetical, and the DM is better at drawing
inferences from givens. However, when $i$ and $j$ are both givens,
the property is defensible. Together, the two parts rule out ``discovering''
an additional channel through which a given influences a variable
when a new given is added.

We now explore whether limited-propagation beliefs satisfy the standard
iterated expectations property.
\begin{defn}[Iterated expectations]
\label{def:martingale}Fix $G$ and $G'$. Limited-propagation beliefs
satisfy iterated expectations if for every objective distribution
$p$ and every $x_{G},$$x_{N-G'}$,
\begin{equation}
\hat{p}\left(x_{N-G'}\mid x_{G}\right)=\sum_{x_{g^{*}}}\hat{p}(x_{g^{*}}\mid x_{G})\hat{p}\left(x_{N-G'}\mid x_{G},x_{g^{*}}\right).\label{eq:iterated expectations}
\end{equation}
\end{defn}
In other words, if we regard $\hat{p}(x_{N-G'}\mid x_{G})$ as the
DM's prior belief over $x_{N-G'}$ and $x_{g^{*}}$ as her information,
then (\ref{eq:iterated expectations}) says that on average, the DM's
posterior belief is equal to her prior. \citet{spiegler2020can} asks
a similar question for single-DAG belief models, mainly focusing on
beliefs over individual variables.

To state the result, say that \emph{$g^{*}$ influences $i\in N-G^{\prime}$}
if $F^{G^{\prime}}_{g^{*}\rightarrow i}\neq\emptyset$ and there is
a path in $F^{G^{\prime}}_{g^{*}\rightarrow i}$ from $g^{*}$ to
$i$ that contains no v-colliders (i.e., the path does not include
a sequence $k\rightarrow k'\leftarrow k''$) and does not intersect
$G$. The key is that the direction of links does not change along
the path between $g^{*}$ and $i$, and that the path does not pass
through $G$. Given that $G$ is an ancestral set, if $g^{*}$ does
not influence $i$, then $i$ is independent of $g^{*}$ according
to the distribution $p_{R_{i}}$.
\begin{prop}
\noindent\label{prop:iterated expectations}Suppose that flows exhibit
no contingent thinking and that $G$ is an ancestral set (i.e., $R(i)\subset G$
for every $i\in G$). Then,

\noindent (0) if $g^{*}$ influences no $j\in N-G^{\prime}$, then
beliefs satisfy iterated expectations;

\noindent (1) if $g^{*}$ influences only $j\in N-G^{\prime}$, then
beliefs may or may not satisfy iterated expectations; and

\noindent (2) if $g^{*}$ influences distinct $j,k\in N-G^{\prime}$,
then beliefs do not satisfy iterated expectations.
\end{prop}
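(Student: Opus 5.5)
The approach is to exploit the product structure that no contingent thinking imposes on (\ref{eq: subjective beliefs}). Under either set of givens $H\in\{G,G'\}$, every conditioning set satisfies $M^{H}(i)\subset H$. So, writing $q_i(x_i)\equiv p_{R^{G}_i}(x_i\mid x_{M^{G}(i)})$ and $q'_i(x_i\mid x_{g^*})\equiv p_{R^{G'}_i}(x_i\mid x_{M^{G'}(i)})$, both with $x_G$ fixed, we have
\[
\hat p(x_{N-G'}\mid x_G)=\prod_{i\in N-G'}q_i(x_i),\qquad \hat p(x_{N-G'}\mid x_G,x_{g^*})=\prod_{i\in N-G'}q'_i(x_i\mid x_{g^*}),
\]
and $\hat p(x_{g^*}\mid x_G)=q_{g^*}(x_{g^*})$. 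Hence (\ref{eq:iterated expectations}) reads $\prod_i q_i(x_i)=\sum_{x_{g^*}}q_{g^*}(x_{g^*})\prod_i q'_i(x_i\mid x_{g^*})$. By part (1) of the Assumption, $R^{G'}_i$ is $R^{G}_i$ with the edges of $F^{G'}_{g^*\rightarrow i}$ added, and $M^{G'}(i)$ is $M^{G}(i)$, with $g^*$ added when this flow is nonempty.

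\textbf{Part (0).} The key lemma is: if $g^*$ does not influence $i$, then $q'_i(\cdot\mid x_{g^*})=q_i(\cdot)$ for every $x_{g^*}$. Given the lemma, the right-hand side factors as $\prod_i q_i(x_i)\cdot\sum_{x_{g^*}}q_{g^*}(x_{g^*})$, which equals the left-hand side.

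To prove the lemma there are two cases.
\begin{itemize}
\item If $F^{G'}_{g^*\rightarrow i}=\emptyset$, then $R^{G'}_i=R^{G}_i$ and $M^{G'}(i)=M^{G}(i)$, so there is nothing to prove.
\item Otherwise, every path in $F^{G'}_{g^*\rightarrow i}$ either contains a v-collider or meets $G$. Because $G$ is ancestral, a collider on such a path is not an ancestor of $G$, so it is unconditioned. Hence $g^*$ is d-separated from $i$ by $M^{G}(i)$ in $R^{G'}_i$, which gives $p_{R^{G'}_i}(x_i\mid x_{M^{G'}(i)})=p_{R^{G'}_i}(x_i\mid x_{M^{G}(i)})$.
\end{itemize}
It then remains to show that adding the $g^*$-paths does not change the joint of $(x_i,x_{M^G(i)})$ under the factorization. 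Here I would use part (2) of the Assumption: any added path through a given $h$ already lies in $F_{h\rightarrow i}$. I would then rerun the marginalization argument from the proof of Proposition \ref{prop: polytree} on the ancestral closure of $\{i\}\cup M^G(i)$. This step is the main obstacle, and the place where part (2) of the Assumption and the ancestrality of $G$ must be used carefully.

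\textbf{Part (2).} Suppose $g^*$ influences distinct $j,k$. The left-hand side makes $x_j,x_k$ independent (Proposition \ref{prop: contingent think correlation neglect}). The right-hand side is a $q_{g^*}$-mixture of product distributions $q'_j(\cdot\mid x_{g^*})\,q'_k(\cdot\mid x_{g^*})$. I would construct a $p$ consistent with $R$ in which all variables are binary and every variable on the collider-free, $G$-avoiding influence paths copies its predecessor with probability $1-\varepsilon$, while all other edges are inert. Then $q'_j(\cdot\mid x_{g^*})$ and $q'_k(\cdot\mid x_{g^*})$ are both strictly increasing in $x_{g^*}$, and full support makes $q_{g^*}$ nondegenerate. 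The mixture therefore has strictly positive covariance between $x_j$ and $x_k$, contradicting independence. Since iterated expectations must hold for every $p$, one counterexample suffices.

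\textbf{Part (1).} This needs only two examples in which $g^*$ influences a single $j$.
\begin{itemize}
\item \emph{Satisfied.} Take $R$ given by the chain $G\rightarrow g^*\rightarrow j$, with $F_{G\rightarrow j}$ containing the whole chain. Then $q_j=p(x_j\mid x_G)$, $q_{g^*}=p(x_{g^*}\mid x_G)$ and $q'_j=p(x_j\mid x_{g^*},x_G)$, so (\ref{eq:iterated expectations}) is the law of total probability.
\item \emph{Violated.} Take the same chain, but with $F_{G\rightarrow j}=\emptyset$ while $F_{G\rightarrow g^*}\neq\emptyset$. Then $q_j=p(x_j)$, whereas the right-hand side is $\sum_{x_{g^*}}p(x_{g^*}\mid x_G)\,p(x_j\mid x_{g^*})=p(x_j\mid x_G)$, which differs from $p(x_j)$ for generic $p$.
\end{itemize}
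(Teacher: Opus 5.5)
\textbf{Part (0) has a genuine gap, and it is exactly the step you call the main obstacle.}

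First, your lemma is stated node by node (if $g^{*}$ does not influence $i$, then $q'_i=q_i$), and in that form it is false. Take $G=\{h\}$, $R=\{h\rightarrow b,\ h\rightarrow i,\ b\rightarrow i,\ b\rightarrow a,\ g^{*}\rightarrow a\}$, $F_{h\rightarrow i}=\{h\rightarrow i\}$ and $F^{G'}_{g^{*}\rightarrow i}=\{g^{*}\rightarrow a\leftarrow b\rightarrow i\}$. Then $g^{*}$ does not influence $i$, and $q'_i$ is constant in $x_{g^{*}}$. But $R^{G'}_i$ acquires the edge $b\rightarrow i$ with $b$ a root, so $q'_i=\sum_{x_b}p(x_b)p(x_i\mid x_h,x_b)\neq p(x_i\mid x_h)=q_i$ in general.

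The idea you are missing is the paper's use of the \emph{global} hypothesis. Since $g^{*}$ influences no node, betweenness and ancestrality of $G$ force the first edge of every path in $F^{G'}_{g^{*}\rightarrow i}$ to be $g\rightarrow g^{*}$ with $g\in G$. (An edge $g^{*}\rightarrow k$, or $k\rightarrow g^{*}$ with $k\notin G$, would itself be an influence path to $k$.) Part (2) of the Assumption then places the rest of the path in $F^{G}_{g\rightarrow i}$. So $R^{G'}_i$ differs from $R^G_i$ only by edges from $G$ into $g^{*}$.

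Second, your d-separation step is a non sequitur. You inspect only paths in $F^{G'}_{g^{*}\rightarrow i}$, but d-separation must hold in $R^{G'}_i$, the union of the flows from \emph{all} givens. Rerunning the proof of Proposition \ref{prop: polytree} will not supply this, since that proof rests on the all-or-nothing property.

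\textbf{The paper's own argument has the same hole, and (0) fails as stated.} The paper claims every path in $R^{G'}_j$ from $g^{*}$ to $j$ meets $G$. That claim controls only the edges at $g^{*}$ coming from $F^{G'}_{g^{*}\rightarrow j}$, plus (via part (2)) the onward parts of other givens' paths through $g^{*}$. It does not control an edge out of $g^{*}$ that enters through a collider on another given's flow.

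Counterexample: let $G=\{c,h\}$ and $R=\{c\rightarrow g^{*},\ g^{*}\rightarrow b,\ c\rightarrow b,\ h\rightarrow b,\ b\rightarrow j,\ c\rightarrow j\}$. Set $F_{h\rightarrow j}=\{h\rightarrow b\leftarrow g^{*}\leftarrow c\rightarrow j\}$, $F_{c\rightarrow j}=\{c\rightarrow j,\ c\rightarrow b\rightarrow j\}$ and $F^{G'}_{g^{*}\rightarrow j}=\{g^{*}\leftarrow c\rightarrow j\}$. Close these under betweenness and make all other flows empty; in particular $F_{h\rightarrow g^{*}}=\{h\rightarrow b\leftarrow g^{*}\}$ and $F_{c\rightarrow g^{*}}=F^{G'}_{g^{*}\rightarrow b}=\emptyset$.
\begin{itemize}
\item All hypotheses hold, and $g^{*}$ influences nothing.
\item Yet $R^{G}_j=R^{G'}_j=R$ contains the active path $g^{*}\rightarrow b\rightarrow j$.
\item We have $\hat{p}(x_{g^{*}}\mid x_G)=p(x_{g^{*}})$, and the flows into $b$ are unchanged.
\item So iterated expectations reduces to $\sum_{x_{g^{*}}}p(x_{g^{*}})p(x_j\mid x_G,x_{g^{*}})=p(x_j\mid x_G)$.
\end{itemize}
Let $c$ be uniform, let $g^{*}$ copy $c$, $b$ copy $g^{*}$ and $j$ copy $b$, each with probability $0.9$, and make all other edges inert. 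At $x_c=x_j=1$ the left side is $0.756$ and the right side is $0.5$.

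So (0) needs an extra hypothesis. One that works is to define influence through collider-free, $G$-avoiding paths in $R^{G'}_i$ rather than in $F^{G'}_{g^{*}\rightarrow i}$. Your d-separation step is then valid. Together with the first-edge argument, $g^{*}$ is not an ancestor of $\{i\}\cup M^{G}(i)$ in $R^{G'}_i$, so changing its factor leaves $q_i$ intact.

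\textbf{Parts (1) and (2).} Your part (1) examples are correct.

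Part (2) takes a genuinely different route from the paper. You compare a product in $(x_j,x_k)$ with a $q_{g^{*}}$-mixture of products whose factors are comonotone in $x_{g^{*}}$, which forces strictly positive covariance. The paper instead bounds $\frac14$ against $\frac12(1-\varepsilon)^5$ for an explicit near-deterministic $p$. Your argument is cleaner, but ``copies its predecessor'' is ill-defined when the two influence paths overlap or when $j\rightarrow g^{*}\leftarrow k$. To fix it:
\begin{itemize}
\item Reduce first via betweenness, since every node on an influence path is itself influenced. This lets you take $j$ adjacent to $g^{*}$ and $k$ adjacent to $g^{*}$ or to $j$.
\item If $j\rightarrow g^{*}\leftarrow k$, let $g^{*}$ be a noisy OR of $x_j$ and $x_k$.
\item Check monotonicity under $p_{R^{G'}_j}$ and $p_{R^{G'}_k}$, not under $p$. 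It holds because the path edges lie in these sub-DAGs and every other edge is inert.
\end{itemize}
In that collider case your approach is the one that works. If $j,k$ are non-adjacent, then $X_j\perp X_k$ under the paper's $p$, so $x_j,x_k,x_{g^{*}}$ cannot all agree with high probability as the paper's construction requires.
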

Thus, when $g^{*}$ influences no variables (two variables), the iterated
expectations property holds (does not hold). When $g^{*}$ influences
exactly one variable, the property can go either way. We demonstrate
this below, and prove the result's other two cases in Appendix \ref{sec:Contingent-thinking-proofs}. 

To illustrate, suppose that $N=\{1,2,3,4\}$, $G=\{1\}$, $g^{*}=2$,
and $R(i)=\left\{ j\in N\mid j<i\right\} $. Note that $F^{G}_{2\rightarrow i}=\emptyset$
for $i=3,4$, and so $\hat{p}(x_{3},x_{4}|x_{1})=\hat{p}(x_{3}|x_{1})\hat{p}(x_{4}|x_{1}).$
That is, the DM neglects any correlation between $3$ and $4$ after
updating from $1$.

Consider case (0), where $g^{*}=2$ influences neither $3$ nor $4$.
Then, $F^{G'}_{2\rightarrow i}$ equals either $\emptyset$, $2\leftarrow1\rightarrow i$,
or $2\leftarrow1\rightarrow j\rightarrow i$ for each $i=3,4$. In
either case, $\hat{p}(x_{3},x_{4}|x_{1},x_{2})$ is constant in $x_{2}$
because $1$ blocks any information propagating from $2$ to $3$
or $4$. By Assumption 2, the flows from $1$ to $3$ and $4$ do
not change when conditioning on $2$ as well as $1$. Therefore, $\hat{p}(x_{3},x_{4}|x_{1},x_{2})=\hat{p}(x_{3},x_{4}|x_{1})$,
and iterated expectations trivially hold. 

Now consider case (2), where $g^{*}=2$ influences both $3$ and $4$.
This occurs when under $G^{\prime}$, the flow from $2$ to $i$ contains
a path that does not intersect $G$ for $i=3,4$. Therefore, $F^{G'}_{2\rightarrow i}$
equals either $2\rightarrow i$ or $2\rightarrow j\rightarrow i$
for $j\in N\setminus\left\{ 1,2,i\right\} $. The DM propagates the
flow from $2$ to both $3$ and $4$, so beliefs about both $3$ and
$4$ change with the different values of $2$, introducing correlation
between the two conditional on $1$. This rules out iterated expectations
holding for any $p$ where $3$ and $4$ are both, say, positively
correlated with $2$.

Finally, turn to case (1), where $g^{*}=2$ influences $3$ alone.
Assume that no variables flow to $4$. If $F^{G}_{1\rightarrow2}=1\rightarrow2$,
$F^{G}_{1\rightarrow3}=\emptyset$ and $F^{G^{\prime}}_{2\rightarrow3}=2\rightarrow3$,
then 
\[
\hat{p}\left(x_{3}|x_{1}\right)=p(x_{3}),\ \hat{p}\left(x_{2}|x_{1}\right)=p\left(x_{2}|x_{1}\right),\ \&\ \hat{p}(x_{3}|x_{1},x_{2})=p\left(x_{3}|x_{2}\right).
\]
Combining, iterated expectations holds only if 
\[
\sum_{x_{2}}p\left(x_{3}|x_{2}\right)p\left(x_{2}|x_{1}\right)=p\left(x_{3}\right),
\]
which fails for most $p$; for example, when both $X_{1}$ and $X_{2}$
are pairwise-correlated with $X_{3}$. However, if $F^{G}_{1\rightarrow3}=1\rightarrow3$
and all the other flows are the same as above , then 
\[
\hat{p}\left(x_{3}|x_{1}\right)=p\left(x_{3}|x_{1}\right),\ \hat{p}\left(x_{2}|x_{1}\right)=p\left(x_{2}|x_{1}\right),\ \&\ \hat{p}(x_{3}|x_{1},x_{2})=p\left(x_{3}|x_{2},x_{1}\right).
\]
Iterated expectations holds using the chain rule of probability.

\subsection{Testable implications}

The elements of the limited-propagation belief representation are
quite flexible (although in applications, we are likely to impose
strong structure). Therefore, the question may arise what restrictions
the representation imposes on the DM's subjective beliefs. We perform
an exercise demonstrating that it has testable implications.

Suppose that the true DAG $R$ over $N=\{1,...,n\}$ is the chain
$1\rightarrow\cdots\rightarrow n$. Let $G=\{1\}.$ We also assume
that the order $\succ$ is the natural order on $N$. This ensures
that the DM propagates her information to nodes that are closer to
it before nodes that are farther away. It thus ensures that information
is maximally propagated.

Under this specification, we say that limited-propagation beliefs
about node $i$ are correct if $\hat{p}(x_{i}\mid x_{1})\equiv p(x_{i}\mid x_{1})$
for every $x_{N}$ and every $p$ that is consistent with $R$.
\begin{prop}
\label{prop:testable}If limited-propagation beliefs about $i$ are
correct in the above setting, then they are correct for every $j$
between $1$ and $i$. 
\end{prop}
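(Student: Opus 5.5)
The plan is to reduce the problem to a statement about which nodes ``inherit'' information from node $1$, using Proposition \ref{prop: polytree}. The chain $R$ is a polytree, so any flows satisfy the all-or-nothing property. Proposition \ref{prop: polytree} then gives $\hat{p}(x_{N-G}\mid x_{1})=p_{Q}(x_{N-G}\mid x_{1})$ with $Q(k)=M(k)$, and each factor equals the true conditional $p(x_{k}\mid x_{M(k)})$. Because $R$ is a Markov chain and the order is natural, $M(k)\subseteq\{1,\dots,k-1\}$. Hence $p(x_{k}\mid x_{M(k)})=p(x_{k}\mid x_{m(k)})$, where $m(k)=\max M(k)$ when $M(k)\neq\emptyset$, and the factor is $p(x_{k})$ otherwise. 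So $\hat{p}$ is a forest: each $k\geq2$ has at most one parent $m(k)<k$. Since parents precede children, summing out the nodes after $i$ and the nodes off $i$'s ancestral line leaves only the $m$-ancestral chain of $i$.

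\textbf{Step 1 (characterization).} I will show that beliefs about $k$ are correct if and only if iterating $m$ from $k$ eventually reaches $1$.
\begin{itemize}
\item If the chain is $1=a_{0}<a_{1}<\dots<a_{r}=k$, then
\[
\hat{p}(x_{k}\mid x_{1})=\sum_{x_{a_{1}},\dots,x_{a_{r-1}}}\prod_{t=1}^{r}p(x_{a_{t}}\mid x_{a_{t-1}}).
\]
By the Chapman--Kolmogorov (Markov) property of $p$ along $R$, this equals $p(x_{k}\mid x_{1})$.
\item If the chain instead stops at some root $a_{0}\neq1$, with $M(a_{0})=\emptyset$, then $\hat{p}(x_{k}\mid x_{1})=p(x_{k})$. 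This is wrong for the distribution in which all $X_{l}$ are identical copies of a uniform bit, which is consistent with $R$.
\end{itemize}

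\textbf{Step 2 (betweenness).} If $a\in M(k)$ and $a<j<k$, then $a\in M(j)$. Indeed, $F_{a\rightarrow k}$ is nonempty, so it is the unique path $a\rightarrow\dots\rightarrow k$. That path passes through $j$, so by the betweenness assumption on flows its subpath from $a$ to $j$ belongs to $F_{a\rightarrow j}$. Hence $F_{a\rightarrow j}\neq\emptyset$, and $a\in M(j)$ since $j\succ a$.

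\textbf{Step 3 (induction).} Suppose beliefs about $i$ are correct. By Step 1, the $m$-chain of $i$ is $1=a_{0}<\dots<a_{r}=i$. I prove by strong induction on $j\in\{1,\dots,i\}$ that the $m$-chain of $j$ reaches $1$.
\begin{itemize}
\item If $j$ is on $i$'s chain, this is immediate.
\item Otherwise $a_{t}<j<a_{t+1}$ for some $t$. Since $a_{t}=m(a_{t+1})\in M(a_{t+1})$, Step 2 gives $a_{t}\in M(j)$. Hence $M(j)\neq\emptyset$ and $a_{t}\leq m(j)<j$.
\item By the induction hypothesis, the chain of $m(j)$ reaches $1$, so the chain of $j$ does as well.
\end{itemize}
Step 1 then gives correctness for every $j$ between $1$ and $i$.

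The main obstacle I expect is Step 1. The reduction to the forest structure must be justified carefully: nodes after $i$ are summed out without affecting the marginal because $Q$-parents precede children, and Markov-chain collapsing is needed to replace $p(x_{k}\mid x_{M(k)})$ by $p(x_{k}\mid x_{m(k)})$. I also need one explicit $p$ witnessing failure, for which the ``all variables equal'' distribution suffices.
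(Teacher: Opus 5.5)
Your proof is correct and follows essentially the paper's route: Proposition~\ref{prop: polytree} to get parents $M(k)$, collapse to $m(k)=\max M(k)$ via the Markov property, betweenness to force $M(j)\neq\emptyset$ between consecutive links of $i$'s $m$-chain, then induction; your explicit criterion (correct iff the $m$-chain from $k$ reaches $1$) usefully replaces the paper's unproved claim that beliefs about $i$ are correct iff they are correct about $m(i)$. One minor fix: the paper assumes full support, so your witness should be noisy copies (each bit flipped with probability $\delta\in(0,\tfrac{1}{2})$) rather than identical copies; this still gives $p(x_k\mid x_1)\neq p(x_k)$.
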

If limited-propagation beliefs are correct about variables that are
distant from the given variable, then they should also be correct
about closer ones. This observation demonstrates that limited-propagation
beliefs have empirical bite, under very weak restrictions on the representation's
elements. The result can be extended to apply to any DAG with a path
between $1$ and $i$ by only requiring the property to hold for probability
distributions consistent with the above DAG. That is, $p$ sets all
variables not in the path between $1$ and $i$ are independent of
all those in the path, and makes the variables on the path a Markov
chain. One could in principle explore the complete testable implications,
as \citet{ellis2025subjective} do for DAGs.

The restriction to the natural order is crucial for the result. Consider
the following counter-example. Let $n=4$, and define the order to
be $3\succ4\succ2\succ1$. The flows are as follows: $F_{1\rightarrow4}=R$;
$F_{2\rightarrow3}=2\rightarrow3$; $F_{4\rightarrow3}=3\rightarrow4$;
and add the paths from $1$ that are implied by betweenness. Under
this specification, $M(3)=\{1,2,4\}$, but we can ignore $1$ since
$2$ blocks the path from $1$ to $3$. Also, $M(2)=M(4)=\{1\}.$Therefore,
\[
\hat{p}(x_{3}\mid x_{1})=\sum_{x_{2},x_{4}}p(x_{2}\mid x_{1})p(x_{4}\mid x_{1})p(x_{3}\mid x_{2},x_{4}).
\]
This belief treats $2$ and $4$ as independent conditional on $1$.
Since this property does not hold under $R$, it follows that the
belief about $3$ is incorrect.
\begin{proof}[Proof of Proposition \ref{prop:testable}]
A chain is a special case of a tree. By Proposition \ref{prop: polytree},
limited-propagation beliefs have a single-DAG representation, where
the DAG is given by $M$ (i.e., $M(i)$ is the set of parents of $i$,
for every $i$). The proof will make use of this result.

Suppose limited-propagation beliefs are correct about $i>1$. Note
that $M(i)$ must be non-empty; otherwise, $\hat{p}(x_{i}\mid x_{1})=p(x_{i})$,
hence different from $p(x_{i}\mid x_{1})$ for almost every $p$ that
is consistent with $R$, thereby contradicting the correctness assumption.
Let $m(i)$ denote the maximal element in $M(i)$. Since $m(i)$ blocks
the $R$-path between $i$ and any $j<m(i)$, $p(x_{i}\mid x_{M(i)})=p(x_{i}\mid x_{m(i)})$.
Therefore,
\[
\hat{p}(x_{i}\mid x_{1})=\sum_{x_{m(i)}}p_{M}(x_{m(i)}\mid x_{1})p(x_{i}\mid x_{m(i)}).
\]
Since this equation holds for every $p$ that is consistent with $R$,
it follows that beliefs about $i$ are correct if and only if they
are correct about $m(i)$.

Suppose that for $k\in\left(m(i),i\right)$, beliefs about every $j\in[m(i),k)$
are correct. Note this is true for $k=m(i)+1$. By the betweenness
property of flows, the path from $m(i)$ to $k$ is in $F_{m(i)\rightarrow k}$,
so $M(k)$ is non-empty and $m(k)\geq m(i)$. By definition, $m(k)\in[m(i),k)$,
so beliefs about $m(k)$ are correct. Our earlier argument gives that
beliefs about $k$ are correct since they are correct about $m(k)$.
Induction establishes that beliefs about every $j=m(i),...,i$ are
correct. Since $i$ was selected as an arbitrary node for which beliefs
are correct, this completes the proof.
\end{proof}

\section{Applications\label{sec:Applications}}

We now turn to some applications of our framework. Up to now, $p$
was assumed to be a full-support probability distribution. The givens
include actions taken by the DM. But a DM who maximizes utility given
limited-propagation beliefs may choose to avoid some actions, which
runs against the full support assumption when the action is represented
by a node in $R$. Following \citet{spiegler2016bayesian}, we close
the gap by looking at the limit of $\varepsilon$-equilibrium. In
this setting, we focus on a single action node $a\in G$ (the set
of feasible actions is denoted $A$). We assume that $G=\{a\}\cup R(a)$,
capturing the idea that the DM's action is directly caused by her
information (namely, the variables she observes). The objective distribution
over all variables induced by the DM's strategy $\sigma=x_{G^{*}}\mapsto\sigma(x_{a}\mid x_{G^{*}})\in\Delta A$
is 
\[
p_{\sigma}=x_{N}\mapsto p\left(x_{G^{*}}\right)\sigma\left(x_{a}\mid x_{G^{*}}\right)p\left(x_{N-G}\mid x_{G}\right),
\]
where $p(x_{G^{*}})$ and $p(x_{N-G}\mid x_{G})$ are exogenous and
determined by the setting. To make our result applicable, assume these
components have full support (this can be relaxed).

The strategy $\sigma$ is a \emph{personal equilibrium} if it is the
$m\rightarrow\infty$ limit of a sequence of strategies $\sigma_{m}$
satisfying, for every $m$ and $x_{G^{*}}$, \textit{(i)} $\sigma_{m}(x_{a}\mid x_{G^{*}})\geq\frac{1}{m}$
for all $x_{a}$; and \textit{(ii)}
\[
\sigma_{m}\left(x_{a}\mid x_{G^{*}}\right)>\frac{1}{m}\implies x_{a}\in\arg\max_{x^{\prime}_{a}\in A}\sum_{x_{N-G}}\hat{p}_{\sigma_{m}}\left(x_{N-G}\mid x_{G}\right)u\left(x^{\prime}_{a},x_{N-a}\right).
\]
In other words, under the perturbed equilibrium strategy, every action
that is played with probability above $1/m$ is a subjective best-reply
with respect to the DM's limited-propagation beliefs (which are induced
by the perturbed strategy itself). Personal equilibrium is the limit
as the perturbation vanishes.

To aid exposition in this section, we abuse notation by letting the
variables themselves (in lower case form) assume the role of nodes
in the graph. This also applies to sets of parents, i.e., for a DAG
$Q$ we let $Q(a_{i})$ be random variables that are parents of the
variable $a_{i}$.

\subsection{A sequential public-good provision game\label{example public good}}

Our DM is Player $1$, the first mover in a sequential three-player
game. She makes an effort choice $a_{1}$ after observing a signal
$t$ about an underlying state of Nature $\theta$. Player $i=2,3$
acts in period $i$, choosing whether or not to veto a project after
observing $a_{i-1}$ and an almost perfect signal about $\theta$.
The project is completed if Player $1$ exerts effort and neither
of the other two veto it. The project is a public good, which provides
a gross benefit of $1$ to all players, but also inflicts state-dependent
penalties on them. When the state of Nature is $\theta=1$, a penalty
of $1$ is imposed on Players 1 and 2, but when the state is $\theta=0$,
a penalty of $1$ is imposed on Player 3. 

All variables take values in $\{0,1\}$. For player $1$, $a_{1}=1$
($a_{1}=0$) represents high (low) effort, and for $i>1$, $a_{i}=0$
($a_{i}=1$) represents (not) vetoing the project. Player $i$ does
not veto ($a_{i}=1$) only if $a_{i-1}=1$ and she believes she is
more likely to be in a ``good'' state. However, by the asymmetric
distribution of penalties, players $2$ and $3$ differ in their evaluation
of the two states: player $2$ only wants the project to go ahead
when $\theta=1$, whereas player $3$ only wants to proceed if $\theta=0$.
The DM's payoff function is $u(x)=a_{1}(\theta a_{2}a_{3}-c)$, where
$c>0$ is a known cost of effort. The data-generating process is as
follows: $p(\theta=1)=\frac{1}{2}$; $p(\theta=t^{*}\mid t=t^{*})=q\in\left(\frac{1}{2},1\right)$
for every $t^{*}$; 
\begin{align*}
 & p(a_{2}=1|a_{1}=a^{*},\theta=\theta^{*})=(1-2\varepsilon)a^{*}\theta^{*}+\varepsilon;\ and\ \\
 & p(a_{3}=1|a_{2}=a^{*},\theta=\theta^{*})=(1-2\varepsilon)(1-\theta^{*})a^{*}+\varepsilon
\end{align*}
for some $\varepsilon\in(0,\frac{q}{1+q})$, typically approximately
$0$.

For $\varepsilon$ small enough, in any sequential equilibrium with
Bayesian updating, the DM chooses $a_{1}=0$. The reason is that when
she takes action $a^{*}$, 
\[
a_{2}a_{3}=\left(\theta a^{*}\right)\cdot\left(\theta a^{*}(1-\theta)\right)=0
\]
with probability almost $1$. The intuition is simple: the DM's opponents
have diametrically opposed attitudes to the state, and so will never
coordinate on high effort, implying that the DM has no incentive to
exert effort. 

We explore whether a DM with limited-propagation beliefs may form
more optimistic beliefs that impel her to exert effort. The true DAG
is 
\[
\begin{array}{ccccc}
t & \leftarrow & \theta\\
 &  & \downarrow & \searrow\\
a_{1} & \rightarrow & a_{2} & \rightarrow & a_{3}
\end{array},
\]
a relabeling of that in Section \ref{subsec:A-multi-DAG-representation}.
The order and flows are as described there: $F_{t\rightarrow j}$
and $F_{a_{1}\rightarrow j}$ consist of the shortest paths in $R$
from $t$ or $a_{1}$ to $j$ for all $j$; $F_{a_{2}\rightarrow a_{3}}=a_{2}\rightarrow a_{3}$;
for all other $i,j$, $F_{i\rightarrow j}=\emptyset$; and $\theta\prec a_{2}\prec a_{3}$.
Her limited-propagation beliefs are thus given by Equation (\ref{eq: formula public good})
after relabeling, with $p_{\sigma}$ replacing $p$. She correctly
updates her beliefs about $\theta$ and $a_{2}$, though she misses
some correlation. When she turns to $a_{3}$, she propagates $t$
and $a_{2}$ directly to $a_{3}$, not taking into account the intermediate
link between $\theta$ and $a_{2}$. Specifically, she chooses $a_{1}$
to maximize her expected payoff given by the belief $\hat{p}(\theta,a_{2},a_{3}\mid t,a_{1})$
in Equation (\ref{eq: formula public good}). Therefore, she plays
$a_{1}=1$ when she observes $t=t^{*}$ if and only if $\hat{p}(\theta a_{2}a_{3}=1\mid t=t^{*},a_{1}=1)>c$.

Consider the strategy $\sigma_{m}(a_{1}=1|t^{*})=\frac{m-1}{m}$ for
each $t^{*}$. We verify that this converges to a personal equilibrium
where the DM plays $1$ \textit{for every signal}, as long as $\varepsilon$
small enough. To simplify notation, we drop the dependence of $p$
on $\sigma_{m}$.

To calculate $\hat{p}$, we first calculate
\[
p(a_{2}=1\mid t=1,a_{1}=1)=q(1-\varepsilon),
\]
\[
p(\theta=1\mid t=1,a_{1}=1)=q,
\]
and then calculate $p_{R_{3}}(a_{3}=1\mid a_{2}=1,t=1)$:
\[
\sum_{\theta^{*}=0,1}p(\theta=\theta^{*}\mid t=1)p(a_{3}=1\mid\theta=\theta^{*},a_{2}=1)=\left(1-q\right)+\left(2q-1\right)\varepsilon.
\]
Combining, 
\[
\hat{p}(\theta a_{2}a_{3}=1\mid t=1,a_{1}=1)=q^{2}(1-q)+q^{2}\left(3q-2\right)\varepsilon
\]
 for $\varepsilon$ small enough. Similarly, 
\[
\hat{p}(\theta a_{2}a_{3}=1\mid t=0,a_{1}=1)\approx(1-q)^{2}q.
\]
Therefore, as long as $c<q(1-q)^{2}$, the conjectured strategy maximizes
utility for limited-propagation beliefs and $\varepsilon$ sufficiently
small. Thus, playing $a_{1}=1$ at every $t$ is a personal equilibrium,
a \textit{maximal deviation} from the sequential equilibrium prediction.

We now compare this prediction to single-DAG belief formation models
(as we defined them in Section \ref{sec:general analysis}).
\begin{prop}
\label{prop:public good}For any DAG $Q$, if $\varepsilon<\frac{c(1-q)}{2-q}$,
then $p_{Q}\left(\theta a_{2}a_{3}=1\mid t=t^{*},a_{1}=a^{*}\right)<c$
for every $\left(t^{*},a^{*}\right)$ pairs.
\end{prop}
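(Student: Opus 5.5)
The plan is a direct bound on the factorization $p_{Q}$, uniform over all DAGs $Q$ on $\{t,\theta,a_{1},a_{2},a_{3}\}$. Throughout, $p=p_{\sigma}$ for an arbitrary strategy $\sigma$ with full support.

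Fix $Q$ and a topological order of $Q$. Write
\[
p_{Q}(\theta=1,a_{2}=1,a_{3}=1\mid t^{*},a^{*})=\frac{p_{Q}(\theta=1,a_{2}=1,a_{3}=1,t^{*},a^{*})}{p_{Q}(t^{*},a^{*})},
\]
and expand the numerator as a sum over nothing (all five variables are fixed), i.e. as a single product $\prod_{i}p(x_{i}\mid x_{Q(i)})$. The goal is to show that this product contains one factor of order $\varepsilon$. The remaining factors, after dividing by the denominator, must then be shown to contribute at most $(2-q)/(1-q)$. The constant $1-q$ in the threshold suggests the source of the loss: every conditional probability of $t$ under $p_{\sigma}$ lies in $[1-q,q]$, because $t$ is a $q$-accurate signal of $\theta$ and any conditioning only mixes over $\theta$. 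Hence replacing a factor involving $t$ in the numerator by the corresponding marginal costs at most a factor $1/(1-q)$.

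The order-$\varepsilon$ factor comes from the structural facts of $p_{\sigma}$. First, $p(a_{3}=1\mid\theta=1,S)=\varepsilon$ for every conditioning set $S$ containing $\theta=1$, since $a_{3}$ depends only on $(a_{2},\theta)$ and is $\varepsilon$-noise when $\theta=1$. Second, $p(\theta=0,a_{2}=1\mid S)\le\varepsilon$ for any $S$ not containing $\theta$, since $a_{2}=1$ requires $\theta=1$ up to noise $\varepsilon$. Hence $p(a_{3}=1\mid a_{2}=1,S)\le\varepsilon+p(\theta=0\mid a_{2}=1,S)\le\varepsilon\bigl(1+1/p(a_{2}=1\mid S)\bigr)$. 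Third, symmetrically, conditioning $\theta$ or $a_{2}$ on $a_{3}=1$ together with the other pushes the probability of the third event below order $\varepsilon$.

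I would then split into cases according to which of $\theta,a_{2},a_{3}$ comes last in the topological order of $Q$, and which of the other two lie in its parent set.

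Case 1: $a_{3}$ is last and $\theta\in Q(a_{3})$. The $a_{3}$-factor is exactly $\varepsilon$.

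Case 2: $a_{3}$ is last, $a_{2}\in Q(a_{3})$ and $\theta\notin Q(a_{3})$. Use the second bound above. The extra $1/p(a_{2}=1\mid S)$ is absorbed by the $a_{2}$-factor already present in the product, leaving something of order $2\varepsilon$.

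Case 3: $a_{3}$ is last and $Q(a_{3})\cap\{\theta,a_{2}\}=\emptyset$. Then $p(a_{3}=1\mid x_{Q(a_{3})})$ is an average of $p(a_{3}=1\mid a_{2},\theta)$. Here I would instead look at the last of $\theta,a_{2}$, whose factor is unconstrained by $a_{3}$, and argue that $a_{3}=1$ forces weight on $a_{2}=0$ or $\theta=0$ when it is marginalized. This case is where I expect a genuine loss of the constant.

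Case 4: $\theta$ or $a_{2}$ is last with $a_{3}$ among its parents. Use the symmetric bounds $p(\theta=1\mid a_{3}=1,a_{2}=1,\cdot)$ and $p(a_{2}=1\mid a_{3}=1,\theta=1,\cdot)$ of order $\varepsilon/(\text{denominator})$.

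Case 5: none of the three is a parent of another. Then the product of the three factors is an average over the latent parents. I would bound it using the pairwise smallness of $p(\theta=1,a_{3}=1)$ and $p(a_{2}=1,a_{3}=1)$ conditional on $(t,a_{1})$, together with independence-type arguments from the factorization.

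The main obstacle is the bookkeeping in Cases 3 and 5, where $p_{Q}$'s marginals on $\{t,a_{1}\}$ need not agree with $p$'s. I would handle this by showing that in $p_{Q}$ the conditional distribution of each of $\theta,a_{2},a_{3}$ given $(t^{*},a^{*})$ is a mixture of objective conditionals whose weights on $t$ are distorted by at most the ratio $q/(1-q)$. Combining the $\varepsilon$ factor with this distortion (and the factor $2$ from Case 2) should give $p_{Q}(\cdot)\le\varepsilon(2-q)/(1-q)<c$ under the stated hypothesis $\varepsilon<c(1-q)/(2-q)$. If the exact constant does not come out, I would settle for some explicit $O(\varepsilon)$ bound and then verify the constant in the worst case, which I expect to be Case 2 with $t\in Q(a_{2})$.
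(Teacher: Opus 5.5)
\textbf{The Case 2 step fails, and under your reading the statement is false.} The failure is the ``absorption'' in your Case 2. The $a_2$-factor in $p_Q$ is $p(a_2=1\mid x_{Q(a_2)})$, and nothing ties it to $p(a_2=1\mid S)$ with $S=Q(a_3)\setminus\{a_2\}$. Read as you read it, for an arbitrary full-support $\sigma$, the statement is false.

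Take $Q$ with edges $t\to a_1\to a_2\to a_3$ and $a_2\to\theta$, and set $\sigma(a_1=1\mid t)=s$ for both signals. Then $p_Q(\theta a_2a_3=1\mid t^*,a_1=1)=p(a_2=1\mid a_1=1)\,p(a_3=1\mid a_2=1)\,p(\theta=1\mid a_2=1)$. Here $p(a_2=1\mid a_1=1)=\tfrac12$, $p(\theta=0,a_2=1)=\tfrac{\varepsilon}{2}$ and $p(\theta=1,a_2=1)=\tfrac12[s(1-\varepsilon)+(1-s)\varepsilon]$. As $s\to0$, both $p(\theta=1\mid a_2=1)$ and $p(a_3=1\mid a_2=1)$ tend to $\tfrac12$. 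So the expression tends to $\tfrac18$ however small $\varepsilon$ is, and the conclusion fails for $c<\tfrac18$. In this example your bound $\varepsilon\bigl(1+1/p(a_2=1)\bigr)$ is of order one, and the $a_2$-factor $\tfrac12$ absorbs nothing.

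\textbf{What is missing} is a direct lower bound on $p(a_2=1\mid S)$. For $S\subseteq\{t,a_1\}$ you have $p(\theta=0,a_2=1\mid S)=\varepsilon\,p(\theta=0\mid S)\le q\varepsilon$ and $p(a_2=1\mid S)\ge(1-\varepsilon)\,p(\theta=1,a_1=1\mid S)$. Together these give $p(a_3=1\mid a_2=1,S)\le\varepsilon+\frac{q\varepsilon}{(1-q)(1-\varepsilon)}$, which is below $\frac{2-q}{1-q}\varepsilon$ for $\varepsilon\le 1-q$. But this requires $p(\theta=1,a_1=1\mid S)\ge 1-q$. That holds when $a_1=1$ belongs to $S$, or, up to a factor tending to one, when $\sigma$ puts probability near one on $a_1=1$. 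The latter is the case relevant for ruling out effort, e.g.\ $\sigma_m(a_1=1\mid t)=\frac{m-1}{m}$. You must state such a restriction. The paper's Case 2 gets $2+\frac{q}{1-q}$ from the ratio $p(a_2=1\mid S)/p(\theta=0\mid S)$, which is the reciprocal of what Bayes' rule gives, so it silently needs the same lower bound.

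\textbf{Comparison with the paper, and the unfinished cases.} Apart from this, your case structure refines the paper's, and the extra cases are where your proposal remains unfinished. The paper splits only on $Q(a_3)$: $a_2\notin Q(a_3)$; $\theta\notin Q(a_3)$; $\{\theta,a_2\}\subseteq Q(a_3)$. It bounds $p_Q$ by the $a_3$-factor, obtaining $3\varepsilon$, $(2+\frac{q}{1-q})\varepsilon$ and $\varepsilon$ respectively.

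Your Case 3 is not where a constant is lost. If $Q(a_3)\subseteq\{t,a_1\}$, the $a_3$-factor is at most $\varepsilon+p(\theta=0,a_2=1\mid x_{Q(a_3)})\le 2\varepsilon$ outright; this is the paper's first case.

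Your worry that $a_3$ need not be a sink of $Q$ (Cases 4--5) is legitimate, and the paper does not address it. However, you offer only heuristics there. Your case list is also not exhaustive: with the topological order $t,a_1,a_2,a_3,\theta$, the example above has $\theta$ last with $Q(\theta)=\{a_2\}$ and fits none of your five cases. And falling back on an unspecified $O(\varepsilon)$ bound would not yield the stated threshold.

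\textbf{Two misstated structural facts.} First, $p(\theta=0,a_2=1\mid S)\le\varepsilon$ holds for $S\subseteq\{t,a_1\}$ but fails when $S$ contains $a_3=1$ (it is then about $\tfrac13$). Second, conditional probabilities of $t$ can leave $[1-q,q]$ once you condition on $a_1$ under a signal-dependent $\sigma$.
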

\begin{proof}
As above, the DM best replies with $a_{1}=1$ only if $p_{Q}(a_{2}a_{3}=1|a_{1}=1,t=t^{*})\geq c$.
We consider three cases for $Q(a_{3})$. To state them succinctly,
let $\bar{x}_{A}$ be the event that all variables in $A\subset\{t,a_{1},\theta,a_{2},a_{3}\}$
equal 1, except $t$ which must equal $t^{*}$ when $t\in A$.

Case 1: $a_{2}\notin Q(a_{3})$. Notice that 
\[
p_{Q}\left(\theta a_{2}a_{3}=1\mid t=t^{*},a_{1}=1\right)\leq p\left(a_{3}=1|\bar{x}_{Q(a_{3})}\right)+\varepsilon
\]
Then, since $p(a_{3}=1|\theta=\theta^{*},a_{2}=a^{*})>\varepsilon$
only if $\theta^{*}=0$ and $a^{*}=1$. If $\theta\notin Q(a_{3})$,
then 
\begin{align*}
p(a_{3}=1|\bar{x}_{Q(a_{3})}) & \leq p\left(\left(\theta,a_{2}\right)=\left(0,1\right)|\bar{x}_{Q(a_{3})}\right)+\varepsilon\\
 & =p\left(a_{2}=1|\theta=0,\bar{x}_{Q(a_{3})}\right)p\left(\theta=0|\bar{x}_{Q(a_{3})}\right)+\varepsilon\\
 & \leq2\varepsilon.
\end{align*}
The second equality comes from $p\left(a_{2}=1|\theta=0,a_{1}=a^{*}\right)=\varepsilon$
for every $a^{*}$. So $p_{Q}\left(\theta a_{2}a_{3}=1\mid t=t^{*},a_{1}=1\right)\leq3\varepsilon$. 

Case 2: $\theta\notin Q(a_{3})$. This is similar to Case 1. Since
$p(a_{3}=1|\theta=\theta^{*},a_{2}=a^{*}_{2})>\varepsilon$ only if
$\theta^{*}=0$ and $a^{*}_{2}=1$,
\begin{align*}
p(a_{3}=1|\bar{x}_{Q(a_{3})}) & \leq p(\theta=0,a_{2}=1|\bar{x}_{Q(a_{3})})+\varepsilon\\
 & =p(\theta=0|a_{2}=1,\bar{x}_{Q(a_{3})-\{a_{2}\}})p(a_{2}=1|\bar{x}_{Q(a_{3})})+\varepsilon\\
 & =p(a_{2}=1|\theta=0,\bar{x}_{Q(a_{3})-\{a_{2}\}})\frac{p(a_{2}=1|\bar{x}_{Q(a_{3})-\{a_{2}\}})}{p(\theta=0|\bar{x}_{Q(a_{3})-\{a_{2}\}})}p(a_{2}=1|\bar{x}_{Q(a_{3})})+\varepsilon\\
 & \leq\varepsilon\frac{q}{1-q}+\varepsilon
\end{align*}
The second inequality comes from noting that $p(a_{2}=1|\theta=0,\bar{x}_{Q(a_{3})-\{x\}})=\varepsilon$,
$p(a_{2}=1|\bar{x}_{Q(a_{3})})\leq1$, and $Q(a_{3})-\{a_{2}\}\subset\{t,a_{1}\}$.
Then, $p(a_{2}=1|\bar{x}_{Q(a_{3})-\{a_{2}\}})\leq q$ and $p(\theta=0|\bar{x}_{Q(a_{3})-\{a_{2}\}})\geq1-q$.
Together, $p_{Q}\left(\theta a_{2}a_{3}=1\mid t=t^{*},a_{1}=1\right)\leq\left(2+\frac{q}{1-q}\right)\varepsilon$.

Case 3:$\{\theta,a_{2}\}\subset Q(a_{3})$. Then,
\[
p_{Q}\left(\theta a_{2}a_{3}=1\mid t=t^{*},a_{1}=1\right)\leq p(a_{3}=1|a_{2}=1,\theta=1)=\varepsilon
\]
so $p_{Q}\left(\theta a_{2}a_{3}=1\mid t=t^{*},a_{1}=1\right)\leq\varepsilon$.

Therefore, the DM best replies with $a_{1}=0$ whenever $\varepsilon<\frac{(1-q)c}{2-q},\frac{c}{2}$
since $a_{1}=1$ leads to a negative payoff, and regardless of $p_{Q}\left(\cdot\right)$,
$a_{1}=0$ gives utility weakly greater than zero.
\end{proof}

Omitting links from the true DAG captures various types of coarse
reasoning. For example, omitting $\theta\rightarrow a_{2}$ captures
neglect of the correlation between these two variables (i.e., the
fact that the player who chooses $a_{2}$ is responsive to $\theta$).
Omitting $\theta\rightarrow a_{3}$ captures a DM in the spirit of
the bilateral-trade example in \citet{esponda2008behavioral}, who
tracks the correlation between $a_{2}$ and $a_{3}$ but neglects
the fact that they are both affected by $\theta$. Fully cursed beliefs
\citep{eyster2005cursed} \textit{reorient }the links that go into
$a_{2}$ and $a_{3}$ such that their origin becomes $t$ instead
of omitting links. They can be described by the DAG $Q$
\[
\begin{array}{ccccc}
\theta & \rightarrow & t & \rightarrow & a_{3}\\
 &  & \downarrow & \searrow & \uparrow\\
 &  & a_{1} &  & a_{2}
\end{array}
\]
The DM attributes the endogenous variables $a_{2}$ and $a_{3}$ to
her information rather than the latent state of Nature. 

Proposition \ref{prop:public good} applies to both coarse and cursed
DAGs. These specifications all predict that the DM always chooses
$a_{1}=0$. Thus, the prediction of limited-propagation beliefs in
the public good game is maximally different not only from sequential
equilibrium, but also from familiar models of decision-making under
misspecified subjective models.

\subsection{A social learning game\label{subsec:social-learning}}

We present a social learning game in which each player aims to infer
exogenous state variables from a partial observation of past players'
moves. In the Bayesian equilibrium of the game, each player's action
is fully revealing, and players attain their first-best outcome in
equilibrium. We consider how players with limited-propagation beliefs
play this game. The flows used capture players that have difficulty
in propagating information over $v$-colliders (as in the Monty Hall
example). These players neglect redundancy in information as in \citet{eyster2014extensive}.

Consider a sequence of players and a sequence of signals $s_{0},s_{1},\dots$where
Player $t=1,2,...$ observes signals $s_{t-1}$ and $s_{t}$ as well
as the action of her predecessor $a_{t-1}$. The $s_{t}$'s are \textit{i.i.d}
standard normal variables. Player $t\geq1$ chooses action $a_{t}\in\mathbb{R}$
to maximize
\[
u_{t}(a_{t},\vec{a},\vec{s})=-\left(a_{t}-\sum^{t}_{\tau=0}s_{\tau}\right)^{2},
\]
and so chooses the action 
\[
E[s_{t}+s_{t-1}+...+s_{0}\mid s_{t},s_{t-1},a_{t-1}].
\]
The relationships between variables can be described by the DAG 
\[
\begin{array}{ccccccccccccc}
 &  & s_{1} &  & s_{2} &  & s_{3} &  & s_{4} &  & s_{5} &  & s_{6}\\
 &  & \downarrow & \searrow & \downarrow & \searrow & \downarrow & \searrow & \downarrow & \searrow & \downarrow & \searrow & \downarrow\\
s_{0} & \rightarrow & a_{1} & \rightarrow & a_{2} & \rightarrow & a_{3} & \rightarrow & a_{4} & \rightarrow & a_{5} & \rightarrow & \dots
\end{array}
\]
where Player $t$ drops the nodes $a_{t},a_{t+1},...$ and $s_{t+1},s_{t+2},...$. 

As a benchmark, consider the (sequential) Bayesian equilibrium. Player
$t$ updates her beliefs about $s_{0},s_{1},\dots,s_{t-2}$, and then
chooses $a_{t}$ to match the expectation of their sum. Clearly, $a_{1}=s_{1}+s_{0}$.
Assume that $a_{\tau}=s_{0}+s_{1}+...+s_{\tau}$ for $\tau\leq t$.
Consider Player $t+1$. Since $a_{t}-s_{t}=\sum^{t-1}_{j=0}s_{j}$,
playing
\[
a_{t+1}=a_{t}+s_{t+1}=s_{0}+s_{1}+\dots+s_{t+1}
\]
obtains maximum utility. Therefore, it is an equilibrium strategy,
and if all players follow it, then all private information is revealed
by actions, and all players attain their first-best payoffs.

Now, suppose that Players $1,...,t-1$ adhere to the Bayesian equilibrium
strategies, while Player $t$ has limited-propagation beliefs with
flows described as follows. The flow between two variables consists
of all the paths between them in which all the edges go in the same
direction. For example, the flow from $a_{t}$ to $a_{\tau}$ for
$\tau<t$ is 
\[
a_{t}\leftarrow a_{t-1}\leftarrow...\leftarrow a_{\tau};
\]
it does not include the path $a_{t}\leftarrow s_{t-1}\rightarrow a_{t-1}\leftarrow a_{t-2}\leftarrow...\leftarrow a_{\tau}$.
Importantly, there is no flow from $s_{t-1}$ to $s_{\tau}$ for any
$\tau<t-1$: any path between the two includes a $v$-collider, such
as $s_{t-1}\rightarrow a_{t-1}\leftarrow s_{t-2}$. Thus, Player $t$
cannot propagate information over a $v$-collider, but there is no
restriction on the number of inference steps she can perform, and
she treats given and hypothetical variables symmetrically. Finally,
we order so that variables closer to the givens come first: $a_{t-2}\prec s_{t-2}\prec a_{t-3}\prec s_{t-3}\prec...$.\footnote{As long as $a_{t-2}\prec a_{t-3}\prec a_{t-4}\prec...$, the exact
placement of $s_{\tau},s_{\tau'}$ will not qualitatively affect the
analysis below.}

We now proceed to derive Player $t$'s updated belief, given $s_{t-1},a_{t-1},s_{t}$.
Let $\hat{E}\left[\cdot\right]$ be the expectation operator using
$\hat{p}\left(\cdot|a_{t-1},s_{t-1},s_{t}\right)$. Note that the
best action is 
\[
\hat{a}_{t}=\hat{E}[s_{0}]+\hat{E}[s_{1}]+...+\hat{E}[s_{t}].
\]
Clearly $\hat{E}[a_{t-1}]=a_{t-1}$, $\hat{E}[s_{t-1}]=s_{t-1}$ and
$\hat{E}[s_{t}]=s_{t}$. 

We will show by induction that for all $\tau<t-1$, 
\[
\hat{E}[a_{\tau}]=\frac{\tau+1}{t}a_{t-1}\ \&\ \hat{E}\left[s_{\tau}\right]=\frac{1}{t}a_{t-1}.
\]
Consider $\tau<t-1$. The induction hypothesis is that $\hat{E}[a_{\tau+1}]=\frac{\tau+2}{t}a_{t-1}$.
The base case is true, since $\hat{E}[a_{t-1}]=\frac{(t-1)+1}{t}a_{t-1}$. 

Notice that 
\[
R_{a_{\tau}}=\begin{array}{ccccccccc}
... & \rightarrow & a_{\tau-1} & \rightarrow & a_{\tau} & \rightarrow & a_{\tau+1} & \rightarrow & ...\\
 & \nearrow &  & \nearrow &  & \nwarrow\\
s_{\tau-2} &  & s_{\tau-1} &  &  &  & s_{\tau} &  & s_{\tau+1}
\end{array}
\]
because the other paths are not uni-directional. Since $a_{\tau+1}\prec a_{\tau}\prec s_{\tau},a_{\tau-1},s_{\tau-1},...$,
$\left\{ a_{\tau+1}\right\} $ blocks all paths in $R_{a_{\tau}}$
from $a_{\tau}$ to its predecessors, and $a_{\tau+1}$ flows to $a_{\tau}$,
\[
p_{R_{a_{\tau}}}\left(a_{\tau}|\left(s_{\tau+1},...,a_{\tau+1},...\right)_{M(a_{\tau+1})}\right)=p\left(a_{\tau}|a_{\tau+1}\right).
\]
Therefore, $\hat{E}[a_{\tau}]=\hat{E}\left[E[a_{\tau}|a_{\tau+1}]\right]$.
Because $Cov(a_{\tau},a_{\tau+1})=\tau+1$ and $\sigma^{2}(a_{\tau+1})=\tau+2$,
the formula for updating a Normal distribution gives
\[
E[a_{\tau}|a_{\tau+1}]=\frac{\tau+1}{\tau+2}a_{\tau+1}.
\]
Therefore,
\[
\hat{E}\left[a_{\tau}\right]=\frac{\tau+1}{\tau+2}\hat{E}\left[a_{\tau+1}\right]=\frac{\tau+1}{\tau+2}\frac{\tau+2}{t}a_{t-1}=\frac{\tau+1}{t}a_{t-1},
\]
completing the induction step.

Turning to $s_{\tau}$, note that 
\[
R_{s_{\tau}}=\begin{array}{ccccccccc}
... &  & a_{\tau-1} & \rightarrow & a_{\tau} & \rightarrow & a_{\tau+1} & \rightarrow & ...\\
 &  & \uparrow & \nearrow\\
s_{\tau-1} &  & s_{\tau} &  &  &  & s_{\tau+1} &  & s_{t+2}
\end{array}
\]
Since $a_{\tau}\prec s_{\tau}\prec a_{\tau-1}$, $\left\{ a_{\tau}\right\} $
blocks all paths in $R_{s_{\tau}}$ from $s_{\tau-1}$ to its predecessors,
and $a_{\tau}$ flows to $s_{\tau}$, we have 
\[
p_{R_{s_{\tau}}}\left(s_{\tau}|\left(s_{\tau+1},...,a_{\tau+1},...\right)_{M(a_{\tau+1})}\right)=p\left(s_{\tau}|a_{\tau}\right).
\]
\[
p_{R_{s_{\tau}}}\left(s_{\tau}|x_{M(a_{\tau+1})}\right)=p\left(s_{\tau}|a_{\tau}\right).
\]
Because $Cov(s_{\tau-1},a_{\tau})=1$ and $\sigma^{2}(a_{\tau})=\tau+1$,
$E\left[s_{\tau}|a_{\tau}\right]=\frac{1}{\tau+1}a_{\tau}.$ Using
the above formula,
\[
\hat{E}\left[s_{\tau-1}\right]=\hat{E}\left[\frac{1}{\tau+1}a_{\tau}\right]=\frac{1}{\tau+1}\frac{\tau+1}{t}a_{t-1}=\frac{1}{t}a_{t-1}.
\]
This holds for all $\tau<t-1$ since $\tau$ was arbitrary. We conclude
\[
\hat{a}_{t}=s_{t+1}+s_{t}+\frac{t-1}{t}a_{t}=s_{0}+s_{1}+...+s_{t-1}+\frac{2t-1}{t}s_{t}+s_{t}.
\]
That is, the limited-propagation belief double counts $s_{t}$. 

Now, consider a setting where all players have limited-propagation
beliefs. Denote the equilibrium action of player $t$ by $\hat{a}_{t}$.
By inspection, limited-propagation beliefs agree with Bayes rule for
Player 1, so $\hat{a}_{1}=a_{1}=s_{0}+s_{1}$. Then, the equilibrium
action of Player $2$ is as above, namely $\hat{a}_{2}=s_{0}+\frac{3}{2}s_{1}+s_{2}$.
For Player $3$, we can adapt the above arguments to calculate that
\[
\hat{E}\left[s_{1}\right]=\frac{6}{17}a_{2},\ \hat{E}\left[a_{1}\right]=\frac{10}{17}a_{2},\ \&\ \hat{E}\left[s_{0}\right]=\frac{1}{2}\hat{E}\left[a_{1}\right]=\frac{5}{17}a_{2}.
\]
since $cov(\hat{a}_{2},s_{1})=\frac{3}{2}$, $cov(\hat{a}_{1},s_{1})=1$,
$cov(\hat{a}_{2},\hat{a}_{1})=\frac{5}{2}$, $\sigma^{2}(\hat{a}_{1})=2$,
and $\sigma^{2}(\hat{a}_{2})=1+\frac{9}{4}+1=\frac{17}{4}.$ Conclude
\[
\hat{a}_{3}=s_{3}+s_{2}+\frac{11}{17}a_{2}=s_{3}+\frac{28}{17}s_{2}+\frac{33}{34}s_{1}+\frac{11}{17}s_{0}.
\]
We can continue recursively to solve for the actions of later players,
and get a formula of the form 
\[
a_{t}\left(s_{t},s_{t-1},a_{t-1}\right)=s_{t}+s_{t-1}+k_{t}a_{t-1}
\]
for some $k_{t}>0$. Consequently, the equilibrium weight on $s_{t-1}$
is more than $1$ for all $t$, and the redundancy neglect persists
in equilibrium. 

\section{Procedural foundation\label{sec:Procedure}}

We now present an algorithm that implements the limited-propagation
belief representation. The procedure explicitly formalizes our claim
that each term in the factorization formula for $\hat{p}(x_{N-G}\mid x_{G})$
reflects a sequence of local computations along a subset of the paths
in $R$. 

The algorithm updates beliefs over hypothetical target variables one
by one, according to the order $\succ$. It does so by asking which
variables flow to each target, what variables they directly affect
on their path to the target, which ones those directly effect, and
so on (we use the term ``affect'' probabilistically, not causally).
The algorithm computes the magnitudes of these effects using conditional
distributions, which (following the literature) we consider to be
given via a conditional probability table or computed directly from
a dataset. For the variables that flow to the target and precede it
-- whose value has been fixed in the algorithm's previous rounds
-- it restricts attention to outcomes consistent with those values.
The effects are then propagated onwards to the other variables, using
a procedure called Variable Elimination (see Chapter 9 of \citet{koller2009probabilistic}).\footnote{The Computer Science literature has examined limited versions of belief
propagation algorithms (e.g., see \citet{koller2013general} or \citet{ihler2005loopy}).
The goal in these exercises is to develop computationally tractable
approximations of exact algorithms, or to extend the method to graphical
models for which it is unable to deliver an exact computation. }

\begin{algorithm}[t]
\caption{\textsc{Belief Propagation}}
\label{alg:localupdate}\begin{algorithmic}[1]
\Procedure{LocalUpdate}{$i,\left\{F_{j\rightarrow i}\right\}_{j\neq i}, \succ,\left(x^*_j \right)_{j\prec i}, p$}
\For{each $j \in N$}
    \State $\phi_j(\cdot) \leftarrow p(x_j \mid x_{R_i (j)})$ \Comment $R_i= \cup_{k\neq i}  F_{k\rightarrow i} $
	\State $S(\phi_j) \leftarrow \{j\} \cup R_i(j)$ 
	\If { $j \in\{j\prec i: F_{j\rightarrow i} \neq \emptyset\}$   } 
   	 \State $\phi_j(\cdot) \leftarrow \phi_j(\cdot) \cdot \mathbf{1}[x_j = x^*_j]$
	\EndIf
\EndFor
\State $\Phi \leftarrow \{\phi_1, \dots, \phi_N\},\qquad S \leftarrow \{S(\phi_1),\dots,S(\phi_N)\}$
\State $q\leftarrow \textsc{SumProductEliminate}(\Phi,S,i)$
\State \Return $\dfrac{q(\cdot)}{\sum_{x'_i} q(x'_i)}$
\EndProcedure

\Procedure{SumProductEliminate}{$\Phi,S,i$}
\State $N^o \leftarrow N \setminus \{i\}$
\While{$N^o \neq \emptyset$} 
    \State Let $j^*$ be the lowest variable in $N^o$
   
    \State $\phi^*(\cdot) \leftarrow  \displaystyle\sum_{x_{j^*}} \left( \prod_{\phi \in \Phi : j^* \in S(\phi)} \phi(x_{S(\phi)}) \right)$
	\State $S(\phi^*) \leftarrow  \displaystyle \bigcup_{\phi \in \Phi : j^* \in S(\phi)} S(\phi) - \{j^*\}$
    \State $\Phi \leftarrow \{\phi^*\} \cup \Phi - \{\phi \in \Phi : j^* \in S(\phi)\}$
	\State $S \leftarrow S \cup \{ S(\phi^*)\}- \{S(\phi) : j^* \in S(\phi)\}$
    \State $N^o \leftarrow N^o \setminus \{j^*\}$
\EndWhile
\State Let $\tilde{\phi}$ be the unique remaining factor in $\Phi$  
\State \Return $\tilde{\phi}$

\EndProcedure
\end{algorithmic}
\end{algorithm}

Algorithm \ref{alg:localupdate} formally describes the process in
pseudocode. To update variable $i$, it inputs the flows to $i$,
$\{F_{j\rightarrow i}\}_{j\neq i}$, the order $\succ$, and fixed
values for the variables that precede $i$, $(x^{*}_{j})_{j\prec i}$.
Updating consists of two steps. The first uses the flows to $i$ to
determine which calculations are necessary for each variable. It considers
the paths along which information flows to $i$, and uses those paths
to determine the relevant direct causes of each other variable. For
every such variable $j$, it constructs a \emph{factor} $\phi_{j}$.
The factor $\phi_{j}$ is a function that maps each $x_{\{j\}\cup R_{i}(j)}$
to $p(x_{j}\mid x_{R_{i}(j)})$. Its \emph{scope} $S(\phi_{j})$ consists
of the set of indexes $\{j\}\cup R_{i}(j)$. If $j$ flows to $i$
and precedes it, then the factor is updated to restrict attention
to outcomes consistent with $x^{*}_{j}$, i.e., it sets $\phi_{j}(x_{S(\phi_{j})})=0$
whenever $x_{j}\neq x^{*}_{j}$.

The second step propagates the information by marginalizing out every
variable other than the target $i$, one at a time. To eliminate variable
$j$, the algorithm multiplies together all factors whose scope contains
$j$. This produces a single function that inputs $j$ and all the
other variables that appear in a factor with it. It then sums this
product over the values that $j$ can take. This results in a new
factor that no longer depends on $j$ but that retains any information
it carried about the variables to which it was linked. The process
repeats until only $i$ remains. This leaves an updated belief about
$i$, after a renormalization.

The procedure propagates information locally. When $\textsc{SumProductEliminate}$
eliminates variable $j$, only the factors whose scope contains $j$
enter the step. These variables are ``close'' to $j$ (according
to the flows). The first elimination therefore touches only the variables
that directly affect $j$ or are directly affected by it. Subsequent
eliminations may reach variables that are only indirectly related,
since the information from previously eliminated variables has been
folded into the surviving factors. Throughout, the elimination process
only ever considers variables that are related to the target via the
flows. In any given iteration, most of the variables go untouched.

Algorithm \ref{alg:localupdate} yields limited-propagation beliefs.
\begin{prop}
For flows $\left\{ F_{j\rightarrow k}\right\} _{j,k}$ and order $\succ$,
\[
\hat{p}\left(x_{N-G}|x_{G}\right)=\prod_{i\in N-G}\textsc{LocalUpdate}\left(i,\left\{ F_{j\rightarrow i}\right\} _{j\neq i},\succ,\left(x_{j}\right)_{j\prec i},p\right)\left(x_{i}\right).
\]
That is, running Algorithm \ref{alg:localupdate} outputs limited-propagation
beliefs. 
\end{prop}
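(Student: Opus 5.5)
The product over $i\in N-G$ on both sides matches term by term, so it suffices to fix $i\in N-G$ and show that
\[
\textsc{LocalUpdate}\bigl(i,\{F_{j\rightarrow i}\}_{j\neq i},\succ,(x_{j})_{j\prec i},p\bigr)(x_{i})=p_{R_{i}}\bigl(x_{i}\mid x_{M(i)}\bigr).
\]
The argument rests on a standard fact about variable elimination (Chapter 9 of \citet{koller2009probabilistic}). Summing out variables one at a time from a product of factors yields the same function as summing the full product over all eliminated variables at once. This is because each $\phi^{*}$ is formed by collecting exactly the factors whose scope contains $j^{*}$. By distributivity, $\sum_{x_{j^{*}}}$ then passes through all remaining factors, which do not depend on $x_{j^{*}}$.

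The proof proceeds in three steps.

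First, identify the initial factor set. For every $j\in N$, the algorithm sets $\phi_{j}=p(x_{j}\mid x_{R_{i}(j)})$, where $R_{i}(j)$ is the parent set of $j$ in the sub-DAG $R_{i}$. Nodes outside the node set of $R_{i}$ have no parents, so their factor is just $p(x_{j})$. Hence
\[
\prod_{j}\phi_{j}=p_{R_{i}}(x_{N}).
\]
Indicators $\mathbf{1}[x_{j}=x^{*}_{j}]$ are then attached exactly for $j$ with $j\prec i$ and $F_{j\rightarrow i}\neq\emptyset$, that is, for $j\in M(i)$.

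Second, apply the elimination fact. The factor returned by \textsc{SumProductEliminate} is
\[
q(x_{i})=\sum_{x_{N-\{i\}}}p_{R_{i}}(x_{N})\prod_{j\in M(i)}\mathbf{1}[x_{j}=x^{*}_{j}]=p_{R_{i}}\bigl(x_{i},x^{*}_{M(i)}\bigr).
\]
I will prove the elimination fact by induction on the number of eliminated variables. The invariant is that, at every stage, the product of the factors in $\Phi$ equals the original product summed over the variables eliminated so far. One routine check is needed: the final $\Phi$ is a single factor with scope $\{i\}$. Every remaining factor has scope within $\{i\}$, and any scalar factors can be absorbed into that one factor without affecting the argument.

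Third, normalize. Dividing $q$ by $\sum_{x_{i}'}q(x_{i}')=p_{R_{i}}(x^{*}_{M(i)})$ gives $p_{R_{i}}(x_{i}\mid x^{*}_{M(i)})$. This denominator is positive under the full-support assumption. Substituting $x^{*}_{j}=x_{j}$ and multiplying over $i\in N-G$ recovers Equation (\ref{eq: subjective beliefs}).

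The main obstacle is bookkeeping in the elimination invariant. The algorithm's update of $S$ is written as a set of scopes rather than indexed by factors. Scalar factors can also arise when a factor's scope becomes empty, which happens with isolated nodes. I would handle both by treating $\Phi$ as a multiset of (factor, scope) pairs. With that convention, the invariant argument is clean.
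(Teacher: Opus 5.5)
Your argument follows the paper's proof step for step. The initial factors multiply to $p_{R_i}(x_N)$, with indicators attached exactly on $M(i)$. Variable elimination then returns $p_{R_i}(x_i,x^{*}_{M(i)})$; the paper cites Theorem 9.5 of \citet{koller2009probabilistic} here, where you re-derive it through the usual invariant. Normalization and the product over $i\in N-G$ then give Equation (\ref{eq: subjective beliefs}).

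One step, however, is false as stated: your ``routine check'' that the final $\Phi$ is a single factor with scope $\{i\}$, up to scalars. Take $R=\{(1,2)\}$, $G=\{2\}$, $i=1$ and $F_{2\rightarrow 1}=\{(1,2)\}$. Then $\phi_{1}=p(x_{1})$ has scope $\{1\}$ and is never touched. Eliminating $2$ produces a second factor $p(x^{*}_{2}\mid x_{1})$, also with scope $\{1\}$. Normalizing either factor alone gives the wrong answer in general; only their product yields $p(x_{1}\mid x^{*}_{2})$.

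The same thing happens generally. Since $i$ is never eliminated, a factor whose scope has shrunk to $\{i\}$ is never combined again, so several such factors can accumulate. For example, $\phi_{i}$ itself survives untouched whenever $R_{i}(i)=\emptyset$.

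The pseudocode's line ``Let $\tilde{\phi}$ be the unique remaining factor'' is itself imprecise. The correct reading is to return the product of all remaining factors. This is what Koller and Friedman's Sum-Product-VE does, and the paper's appeal to Theorem 9.5 implicitly relies on it. Under that reading, your invariant already identifies the product of the surviving factors with $q(x_i)=p_{R_i}(x_i,x^{*}_{M(i)})$. So drop the uniqueness claim rather than try to prove it. The rest of your argument, including the multiset bookkeeping for scopes and scalar factors, goes through unchanged.
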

\begin{proof}
Prior to running $\textsc{SumProductEliminate}$, $\Phi=\{\phi_{1},\dots,\phi_{N}\}$
satisfies
\[
\prod_{j\in N}\phi_{j}(x)=p_{R_{i}}\left(x_{1},\dots,x_{n}\text{ and }x_{j}=x^{*}_{j}\ \forall j\in M(i)\right).
\]
Summing across all indexes not equal to $i$ gives $p_{R_{i}}(x_{i},x^{*}_{M(i)})$.
Theorem 9.5 of \citet{koller2009probabilistic} establishes that the
$\textsc{SumProductEliminate}$ procedure carries out the summation,
and so returns a function that maps $x_{i}$ to $p_{R_{i}}(x_{i},x^{*}_{M(i)})$.
The normalization changes the marginal probability into a conditional
one. Multiplying over $i\in N-G$ yields Equation (\ref{eq: subjective beliefs}).
\end{proof}

The key procedural difference between Bayesian updating and limited-propagation
beliefs comes in the algorithm's first step, when the factors are
constructed. Bayesian updating is the special case where $F_{j\rightarrow i}=R$
for every $j,i$, so the factor associated with any target variable
utilizes all of its direct causes and fixes the values of all earlier
variables. The Bayesian algorithm thus utilizes the complete information
set, as well as all the full set of paths. The limited-propagation
algorithm instead uses a restricted set of paths. Its factors involve
fewer variables and bring in less of the available evidence.

This has an ambiguous effect on the total cost of computation because
it trades off two opposing forces. The dominant driver of cost is
factor size: the operation count in the elimination phase grows exponentially
in the number of variables a factor contains. On this margin, limited
propagation wins out as its factors are sparser. However, the limited-propagation
algorithm sometimes ``forgets'' variables it has already processed
and re-processes them when updating a later target. This leads to
duplicated work that the Bayesian algorithm avoids. One would expect
the factor-size effect to dominates the duplication effect because
of the exponential effect, but this is not guaranteed. 

\section{Conclusion}

This paper presents a novel approach to modeling limited ability to
reason about multiple variables. Limited-propagation beliefs express
the DM's constraints on reasoning using paths in the underlying causal
model. In particular, the flows capture intuitions about what makes
some inferential paths harder to execute than others. For instance,
they capture limited contingent thinking when those from nodes representing
hypothetical variables are sparser than those emanating from observed
ones. Building on \citet{spiegler2016bayesian}, limited-propagation
beliefs can be comfortably integrated into equilibrium analysis of
decision-theoretic, game-theoretic and market models, as illustrated
in Section \ref{sec:Applications}.

Our framework is formally static, in that time plays no special role
in the model, but it can be readily applied to simple dynamic models,
as in Section \ref{subsec:social-learning}. While a more formal treatment
of dynamic decisions is beyond the scope of this paper, we conclude
with a brief discussion of how limited-propagation beliefs relate
to other models of bounded rationality in dynamic decision problems
and games. 

Among others, \citet[ Ch. 7]{rubinstein1998modeling}, \citet{jehiel2001limited},
and \citet{Ke2019} formulate models of limited foresight in extensive-form
games or decisions. They study players who can perfectly reason about
a short horizon but resort to some criterion when forming payoff expectations
beyond the horizon. Rather than bounding how far ahead players look,
our model limits how far inferences propagate through the causal structure,
both to future variables and historical ones.

\citet{li2017obviously} studies mechanisms in which a move by a player
at some history ``obviously'' dominates another: the worst continuation
following the former move is better for the payer than the best continuation
following the latter. \citet{nagel2023measure} extend this approach
to develop a more complete ranking of the complexity of strategy-proof
mechanisms. While these approaches are based on the relation between
payoffs and contingencies, our approach is independent of the decision
problem's payoff structure, and relies entirely on the causal relations
between variables.

\citet{cohen2026sequential}'s SCE and \citet{fong2025cursed}'s $\chi$-CSE
propose dynamic extensions of Cursed Equilibrium \citep{eyster2005cursed}.
In both, players under-attribute opponents' actions to their private
information in dynamic settings where information accrues as play
unfolds. In SCE, players display ``cursedness'' with respect to
hypothetical events but not those they have already observed, whereas
in $\chi$-CSE, they do not distinguish between the two. SCE treats
near and far future unobserved nodes identically, but $\chi$-CSE
has constant cursedness within a stage, which compounds into beliefs
more cursed about the distant future than the close. Limited-propagation
beliefs instead let accuracy vary both across variables at the same
horizon and across the near-versus-far future.

\appendix

\section{Contingent thinking proofs\label{sec:Contingent-thinking-proofs}}

Throughout this appendix, denote $R^{H}_{j}=\cup_{i}F^{H}_{i\rightarrow j}$
for $H=G,G'$.

\subsection{Proof of Proposition \ref{prop:iterated expectations} when $g^{*}$
influences two or more variables}

Suppose that there are distinct $j,k\in N-G'$ so that $F^{G^{\prime}}_{g^{*}\rightarrow j'}$
contains a collider-free path that does not intersect $G$ for $j'=j,k$.
Let $i^{j'}_{1}=g^{*},\dots,i^{j'}_{n^{j'}}=j^{\prime}$ be that path
for $j'=j,k$. Because the path from $g^{*}$ to $i^{j^{\prime}}_{k^{\prime}}$
is in $F^{G^{\prime}}_{g^{*}\rightarrow i^{j^{\prime}}_{k^{\prime}}}$,
we can restrict attention to $g^{*}\tilde{R}j$ and either $g^{*}\tilde{R}k$
or $j\tilde{R}k$. Consider $p$ so that $p(x_{\{j,k,g^{*}\}})p(x_{N-\{j,k,g^{*}\}})=p(x_{N})$
for every $x_{N}$, $p(\bar{x}_{g^{*}})=\frac{1}{2}$, $p\left(x_{\{j,k,g^{*}\}}=\bar{x}_{\{j,k,g^{*}\}}\right)=1-\varepsilon$,
and $p\left(x_{j^{\prime}}=\bar{x}_{_{j^{\prime}}}|x_{k^{\prime}},x_{g^{*}}\right)=\varepsilon$
for every $x_{g^{*}}\neq\bar{x}_{g^{*}}$ and all $x_{k^{\prime}}$,
for $\varepsilon\approx0$ . For any such $p$, there are distributions
$q$ and $\left\{ r\left(\cdot|x_{g^{*}}\right)\right\} _{x_{g^{*}}}$
on $\left(X_{j},X_{k}\right)$ so that 
\[
\hat{p}(x_{j},x_{k}|x_{G})=q(x_{j})q(x_{k})
\]
for all $x_{G}$, and
\[
\hat{p}(x_{j},x_{k}|x_{G},x_{g^{*}})=\hat{p}(x_{j},x_{k}|x^{\prime}_{G},x_{g^{*}})=r(x_{j},x_{k}|x_{g^{*}})
\]
for all $x_{g^{*}},x_{G},x^{\prime}_{G}$. Then 
\[
p_{R_{j}}\left(\bar{x}_{j}|x_{M(j)}\right)=p_{R_{j}}\left(\bar{x}_{j}\right)=p\left(\bar{x}_{j}\right)=\frac{1}{2}=q\left(\bar{x}_{j}\right)
\]
 since $M(j)\subset G$ and $X_{N-G}\perp X_{G}$. If $g^{*}\tilde{R}k$,
then similar arguments show $p_{R_{k}}\left(\bar{x}_{k}\right)=\frac{1}{2}$.
If $g^{*}\tilde{R}j\tilde{R}k$, then since $X_{k}\perp X_{R_{k}(k)-\{j\}}$
and $X_{j}\perp X_{R_{k}(j)-\{k\}}$, we have
\[
p_{R_{k}}\left(\bar{x}_{k}|x_{M(k)}\right)=\sum_{x_{j}}p\left(\bar{x}_{k}|x_{j}\right)p_{R_{k}}\left(x_{j}|x_{M^{G}(k)}\right)=\varepsilon+\left(1-2\varepsilon\right)p_{R_{k}}\left(\bar{x}_{j}\right)=\frac{1}{2}.
\]
Therefore
\[
q(\bar{x}_{j})q(\bar{x}_{k})=\frac{1}{4}.
\]
Now, since $j\tilde{R}g^{*}$,
\begin{align*}
p_{R_{j}}(\bar{x}_{j}|\bar{x}_{g^{*}},x_{M(j)-\{g^{*}\}}) & =p_{R_{j}}(\bar{x}_{j}|\bar{x}_{g^{*}})\geq p(\bar{x}_{k}|\bar{x}_{g^{*}})p(\bar{x}_{j}|\bar{x}_{k},\bar{x}_{g^{*}})\geq(1-\varepsilon)^{2}.
\end{align*}
If $k\tilde{R}j$, then $p_{R_{k}}(\bar{x}_{k}|\bar{x}_{g^{*}},x_{M(k)-\{g^{*}\}})\geq(1-\varepsilon)^{2}$.
Otherwise, $k\tilde{R}j\tilde{R}g$ and so 
\[
p_{R_{k}}(\bar{x}_{k}|\bar{x}_{g^{*}},x_{M(k)-\{g^{*}\}})=p_{R_{k}}(\bar{x}_{k}|\bar{x}_{j})p_{R_{k}}(\bar{x}_{j}|\bar{x}_{g^{*}})\geq(1-\varepsilon)^{3}.
\]
Therefore,
\[
r(\bar{x}_{j},\bar{x}_{k}|\bar{x}_{g^{*}})\geq\left(1-\varepsilon\right)^{5}.
\]
Since $\hat{p}\left(\bar{x}_{g^{*}}|x_{G}\right)=\frac{1}{2}$, iterated
expectations fails when $\frac{1}{2}\left(1-\varepsilon\right)^{5}>\frac{1}{4}$.
As in Proposition \ref{prop:flows that imply bayes}, standard results
then show that Equation (\ref{eq:iterated expectations}) does not
hold for almost all $p$.

\subsection{Proof of Proposition \ref{prop:iterated expectations} when $g^{*}$
influences no variables}

Suppose that ${F^{G'}_{g^{*}\rightarrow j}}$ contains no collider-free
paths that do not intersect $G$ for all $j$. Fix $j\in N-G'$.We
show first that for all $x_{N}$, 
\[
p_{R^{G'}_{j}}\left(x_{j}|x_{M^{G'}(j)}\right)=p_{R^{G'}_{j}}\left(x_{j}|x_{M^{G'}(j)-\{g^{*}\}}\right).
\]
This claim holds trivially when $F^{G'}_{g^{*}\rightarrow j}=\emptyset$,
so assume that $F^{G'}_{g^{*}\rightarrow j}\neq\emptyset$. Let $k$
be the first node after $g^{*}$ in the path from $g^{*}$ to $j$
in $F^{G^{\prime}}_{g^{*}\rightarrow j}$. If $k=j$, then $\left(g^{*},j\right)$
or $(j,g^{*})$ is a collider-free path that does not intersect $G$,
a contradiction, so $k\neq j$. If $(g^{*},k)\in F^{G'}_{g^{*}\rightarrow j}$,
then $(g^{*},k)\in F^{G'}_{g^{*}\rightarrow k}$. Because $G$ is
ancestral, $k\notin G$. Then, $(g^{*},k)$ is a collider-free path
to $k\notin N-G^{\prime}$, a contradiction. If instead $(k,g^{*})\in F^{G'}_{g^{*}\rightarrow j}$,
then $(k,g^{*})\in F^{G'}_{g^{*}\rightarrow k}$, and by hypothesis,
$k\in G$. Therefore, every path in $R^{G^{\prime}}_{j}$ from $g^{*}$
to $j$ intersects $G$. Since $G$ is ancestral, $g^{*}$ and $j$
are d-separated by $G$, which implies that 
\[
p_{R^{G^{\prime}}_{j}}(x_{g^{*}},x_{j}|x_{G})=p_{R^{G^{\prime}}_{j}}(x_{g^{*}}|x_{G})p_{R^{G^{\prime}}_{j}}(x_{j}|x_{G})
\]
 for every $x_{N}$ \citep[see][]{pearl2009causality}. Since $M^{G^{\prime}}(j)\subset G$,
the above equation holds.

Now, we claim that $R^{G^{\prime}}_{j}=R^{G}_{j}$. Pick any edge
$(z,y)$ with $z,y\neq g^{*}$ in $R^{G^{\prime}}_{j}$. Assumption
1 immediately implies that $(z,y)\in R^{G}_{j}$ when $(z,y)\in F^{G'}_{g\rightarrow j}$
for $g\in G$. If instead $(z,y)\in F^{G'}_{g^{*}\rightarrow j}$,
then there is a path from $g^{*}$ to $j$ in $F^{G'}_{g^{*}\rightarrow j}$
containing $(z,y)$. By the above and that $G$ is ancestral, that
path starts with $g^{*}\leftarrow g$, so there is a path from $g$
to $j$ containing $(z,y)$ in $F^{G'}_{g^{*}\rightarrow j}$. By
Assumption 2, that path is also in $F^{G'}_{g\rightarrow j}$, which
equals $F^{G}_{g\rightarrow j}$ by Assumption 1. So $(z,y)\in R^{G}_{j}$.
Since $R^{G}_{j}\subset R^{G^{\prime}}_{j}$by Assumption 1 and that
$F^{G}_{g^{*}\rightarrow j}=\emptyset$, $R^{G}_{j}=R^{G^{\prime}}_{j}$.

Therefore, $\hat{p}(x_{j}|x_{G},x_{g^{*}})=\hat{p}(x_{j}|x_{G})$.
Since $j$ was arbitrary, 
\[
\hat{p}(x_{N-G'}|x_{G},x_{g^{*}})=\prod_{j\in N-G'}\hat{p}(x_{j}|x_{G})=\hat{p}(x_{N-G'}|x_{G})
\]
so iterated expectations holds.

\bibliographystyle{plainnat}
\bibliography{propagation_bib}

\end{document}